\newif\iffull\fulltrue
\newif\ifdan\dantrue
\newcommand{\mi}{\mbox{\it min\,}}
\newcommand{\ma}{\mbox{\it max\,}}
\newcommand{\m}[2]{r_{#1#2}}
\newcommand{\fun}[1]{{\textsl{#1}}}
\title{Graph Theoretic Investigations \\ on Inefficiencies in Network Models
\vspace*{-.2cm}
}
\author{Pietro Cenciarelli, Daniele Gorla and Ivano Salvo
\vspace*{-.3cm}
}
\institute{Sapienza University of Rome, Dpt. of Computer Science\\
\texttt{\{cenciarelli,gorla,salvo\}}\texttt{\,@\,di.uniroma1.it}
\vspace*{-.5cm}
}
\begin{document}

\sloppy

\maketitle


\begin{abstract}
We consider network models where information items flow 
from a source to a sink node.
We start with a model where routing is constrained by energy available on nodes in finite supply 
(like in Smartdust) and efficiency is related to energy consumption. 
We characterize graph topologies ensuring that every saturating flow under every energy-to-node
assignment is maximum and  provide a polynomial-time algorithm for checking this property.
We then consider 
the standard flow networks with capacity on edges, where again efficiency is related to
maximality of saturating flows, and a traffic model
for selfish routing, where efficiency is related to latency at a Wardrop equilibrium. 
Finally, we show that all these forms of inefficiency 
yield different classes of graphs (apart from the
acyclic case, where the last two forms generate the same class).
Interestingly, in all cases 
inefficient graphs can be made efficient
by removing edges; this resembles a well-known phenomenon,
called {\em Braess's paradox}.
\vspace*{-.2cm}
\end{abstract}


\section{Introduction}

Through the years, several formal models have emerged
for studying network design in terms of network traffic, protocols, energy
consumption, and so on (see \cite{BI97,N80,S85,Iri98}, just to cite a few).
Our investigation of energy efficiency and load
balancing of multi-hop communication in ad-hoc networks started in \cite{cgs09},
where we considered a simple model, called \emph{depletable channels}, in which  
networks are oriented graphs with nodes equipped with a natural number
representing depletable charge, as in Smartdust~\cite{smartdust}. 


To better understand our model, consider a scenario in which four communication devices
are located in a landscape (Fig. \ref{fig:example}(a)). The devices may have different communication radii, e.g.
because of geographical reasons or because of their settings (a larger radius requires more 
energy to perform a communication). Thus, node reachability is not symmetric: in our example, $s$ can
reach (i.e., send information to) $u$ and $v$, but not vice versa. Such a scenario can be very
naturally modeled via a directed graph, where vertices correspond to communication devices and 
an edge is placed between $x$ and $y$ if $y$ falls within the communication circle centered in $x$.
In our example, the resulting graph is given in Fig. \ref{fig:example}(b).
To simplify reasoning, we will always assume that information flows from a single source (a device that does not 
receive information from anyone else -- $s$ in our example) to a single destination (a device that
does not send information to anyone else -- $t$ in our example).

Devices have a depletable amount of energy that is consumed throughout their life; moreover,
they are just information forwarders, so they can only send or receive information. 
Usually, a device consumes less energy when it receives than when it sends;
moreover, energy consumption is proportional to the length of the message exchanged.
This is modeled in our setting by slotting both information and energy, and by assuming that sending
one information unit consumes one energy unit. By contrast, receiving information does not lead to
any consumption. Furthermore, energy can only decrease during the life of the system; so, we  do not
model any form of recharge. This implies that every system has a lifetime: after sending a certain amount
of information from $s$ to $t$, all the intermediate nodes (or at least all those in a cut) will eventually die, due to lack
of energy. The resulting system is then called {\em dead} and the flow of information leading to it will be
called {\em inhibiting}.

\begin{figure}[t]
\begin{center}
\begin{tabular}{ccc}
\begin{minipage}{0.1\textwidth}
\qquad (a)
\end{minipage}
&
\begin{minipage}{0.5\textwidth}
\begin{tikzpicture}
\draw [color=red] (1,0) circle (2);
\draw [color=red] (1,0) node {$s$};

\draw [color= blue] (2.1,1.1) circle (1.3);
\draw [color= blue] (2.1,1.1) node {$u$};

\draw [color= olive] (2.8,0.3) circle (.8);
\draw [color= olive] (2.8,0.3) node {$v$};

\draw (3.1,0.8) circle (.4);
\draw (3.1,0.8) node {$t$};

\end{tikzpicture}
\end{minipage}
&
(b)
\begin{minipage}{0.4\textwidth}
$
\xymatrix@R=15pt@C=20pt{
& u \ar[dd]\ar[dr] & \\
s \ar[ru]\ar[rd] & & t \\ 
& v \ar[ru]
}
$
\end{minipage}
\end{tabular}
\caption{(a) Four communication devices with their communication radius: (b) the associated directed graph.}
\vspace*{-.8cm}
\label{fig:example}
\end{center}
\end{figure}
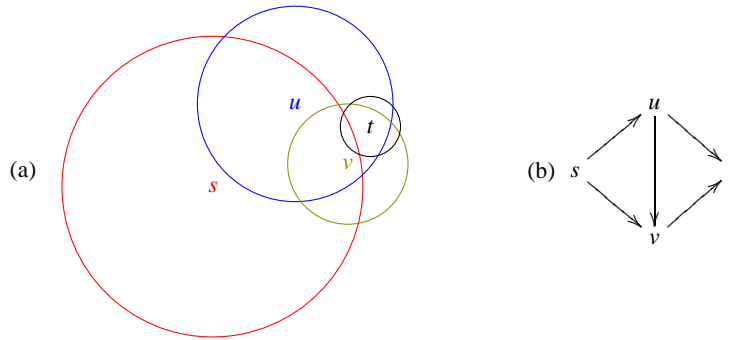

Aimed at capturing an abstract notion of communication service
provided by a network, in \cite{cgs09} we introduced an equivalence on networks
that equates two networks 
if and only if they have identical maximum and minimum inhibiting flow
 (there, we also proved that this corresponds to what
in the theory of concurrency is called {\em trace equivalence}).
Networks in which the minimum inhibiting flow is strictly less than the
maximum flow can be regarded as inefficient,
because an arbitrary routing may cause unnecessary depletion of energy.
As in \cite{Iri98}, we consider flows not controlled by a planning entity 
and thus any legal flow can take place.
We call {\em weak} those graph that, for some charge assignment to nodes,  
the minimum inhibiting flow is strictly less than the
maximum flow. For example, the channel whose topology is given by the graph 
in Fig.\ref{fig:example}(b) and where $u$ and $v$ have charge 1 (usually, we assume
that $s$ and $t$ have 'big' charge) can deliver at most 2 information items. However,
it is also possible that a single unit of information inhibits the channel, if it takes
the 'wrong' way (viz., $s\,u\,v\,t$). Thus, the graph in Fig.\ref{fig:example}(b) is weak.

As shown in this paper, network performance can be improved by
removing edges belonging to $st$-paths passing at least twice through a minimal vertex separator (that is
a minimal set of vertices whose removal disconnects the source from the sink).
In our example, removing the edge $u \rightarrow v$ leads to a non-weak graph.
This somehow resembles a counterintuitive but well-known situation in the
setting of selfish
routing in traffic networks,
called Braess's paradox \cite{braessFormal,braessOriginal}. 
Braess's paradox occurs when the
equilibrium cost may be
reduced by raising the cost of an edge or, equivalently, by removing
such an edge.
This form of inefficiency is known as {\em vulnerability}
\cite{Rough06,Milch06}.
Indeed,
the work presented here started with the empirical observation
that all natural examples of weak (acyclic) graphs turned out to be vulnerable. 
As this paper shows, this is not incidental. 

Both depletable channels and traffic networks have strong similarities
with the standard model of flow networks \cite{amo93}. In particular,
depletable channels
can be easily translated into standard (edge-capacitated) flow
networks. Usually, this
translation is painless, as it preserves all the standard notions in
flow networks, such as
maximum and minimum inhibiting flow. Interestingly,
if we define as {\em edge-weak} those graphs that, for some assignment of capacities to edges, 
admit a non-maximum inhibiting flow, it turns
out that weakness and edge-weakness {\em do not} coincide.

In this paper, we compare these three kinds of inefficiency 
from a graph-theoretical perspective. 
Our main contributions are:

1. In Section \ref{sec:weak}, we characterize {\em weak graphs}
as those graphs with an $st$-path that passes twice through the same minimal vertex separator.

2. In Section \ref{sec:complexity}, we then show that weakness can be checked in polynomial time.

3. In Section \ref{sec:edgeweak}, we characterize {\em edge-weak graphs}
as those graphs with an $st$-path that passes twice through the same cut-set 
(i.e., the set of edges that connect the two blocks of a cut), for a proper kind of cut that
we call {\em connected}.

4. In Section \ref{sec:vuln} we characterize {\em vulnerable graphs} 
as those graphs containing a subgraph homeomorphic to Fig.\ref{fig:example}(b).
This extends previous results of \cite{Milch06,ChenEtal15} to arbitrary directed graphs and 
completely solves a
question left open in \cite{Rough06}. 
Indeed, in \cite{Milch06} the characterization holds only for undirected graphs, whereas
in \cite{ChenEtal15} the result holds only for a specific
class of directed graphs (that they call {\em irredundant}).

5. Finally, in Section \ref{sec:compare}, we relate all the 
above mentioned class of graphs. 
In the general case, vulnerability implies edge-weakness. 
Moreover, vulnerable graphs always contain 
an acyclic weak subgraph.
This suggests that the core reason that makes a graph vulnerable is indeed its
weak subgraphs.
If we restrict our attention to DAGs, 
weakness implies both
vulnerability and edge-weakness. 
These two classes coincide and they coincide with the class of graphs 
that are not series-parallel \cite{RS42}.
%

Due to space constraints, all proofs are relegated to the Appendix.

\vspace{-2mm}
\section{Inefficiency in Depletable Channels: Weakness}
\label{sec:weak}
\vspace{-1mm}

A {\em depletable channel} is a graph $G$ equipped
with a function $\eta$ associating with each node a non-negative integer representing its
\emph{depletable charge}. 
Throughout the paper, we only consider \emph{directed simple st-graphs}, that are  
directed graphs without self-loops and parallel edges, with 
a fixed source vertex $s$ and sink vertex $t$. 
When no confusion arises,
we write just $\eta$ for a channel $(G, \eta)$ and call $G$ its \emph{underlying
graph}. 

Charges may change as result of information passing through the net.
Each item passing through a node consumes one unit of the node's charge, thus
leaving the channel in a state of lower energy.
We also assume
that the charge of the source and of the sink are always 
large enough to be irrelevant 
(say, $\infty$ for simplicity). 

An \emph{st-path} (just path from now on) in a 
graph is a (possibly cyclic) directed walk from $s$ to $t$. 
The set $P(\eta)$ is the set of all paths of $\eta$. 
Since we only consider source-to-sink paths, we can assume that every vertex belongs
at least to one of such paths.
%
%
A \emph{flow} for $\eta$ is a function 
$\phi:P(\eta) \rightarrow \mathbb{N}$ such that
$\phi(v) \leq \eta(v)$, for every $v \in V$.
Here, $\phi(v)$ denotes the amount of $v$'s charge consumed by $\phi$,
i.e., $\phi(v) = \sum_{p \in P(\eta)} \m v p \cdot \phi(p)$,
where $\m v p$ is the number of times in which node
$v$ is repeated in the path $p$ (0, if $v\not\in p$).
With respect to standard flow models, 
we admit a positive flow over cycles because in our model each node
knows its neighborhood only. Without global information about the net,     
routing can easily lead to cyclic paths. 

The {\em value} of $\phi$ is $\sum_{p \in P(\eta)} \phi(p)$.
We denote by $\ma_\eta$ the maximum value of a flow for $\eta$.
We call $\eta$ a \emph{dead} channel if $\ma_\eta=0$. 
The residual of a channel $(G,\eta)$ after a flow $\phi$ is a channel with
the same underlying graph $G$ and with a charge function $\eta'$ such that
$\eta'(v) = \eta(v) - \phi(v)$, for every vertex $v$.
A flow $\phi$ is said to \emph{inhibit} $\eta$ if the residual of $\eta$ after $\phi$ is
dead. We denote by $\mi_\eta$ the smallest value of an inhibiting flow in $\eta$.

The form of inefficiency in this network model is given by the following definition.

\begin{definition}
	\vspace*{-1mm}
\label{def:weak}
A graph $G$ is \emph{weak} if $\mi_\eta\not=\ma_\eta$, for some
channel $\eta$ whose underlying graph is $G$.
	\vspace*{-2mm}
\end{definition}

Two typical examples of channels are depicted in Fig.~\ref{rombi}, where we depict
the charge of a node in place of its name and use '$\circ$' to denote a large-enough charge.
Here, channel (1) is not weak, since every routing of messages always ensures the delivery
of 2 information units (remember that every information unit consumes one charge unit of
every traversed node). By contrast, channel (2) is weak because 
there is also a inhibiting flow that only delivers
1 unit (by sending the information unit along the path $\circ\,2\,1\,1\,2\,\circ$).

\begin{figure}[t]
\begin{center}
\begin{tabular}{c@{\hspace{.5cm}}l@{\hspace{1.5cm}}c@{\hspace{.5cm}}l}
(1) & \raisebox{.5cm}{\xymatrix@R=4pt@C=16pt{&& 1 \ar[dr] \\
   \circ \ar[r] & 2 \ar[ur] \ar[dr] & & 2 \ar[r] & \circ \\
   & & 1 \ar[ur]}}
&
(2) & \raisebox{.5cm}{\xymatrix@R=4pt@C=20pt{&& 1 \ar[dr] \ar[dd]\\
   \circ \ar[r] & 2 \ar[ur] \ar[dr] & & 2 \ar[r] & \circ \\
   && 1 \ar[ur]}}
\vspace*{-.4cm}
\end{tabular}
\end{center}
\caption{Two channels}
\label{rombi}
\vspace*{-2mm}
\end{figure}
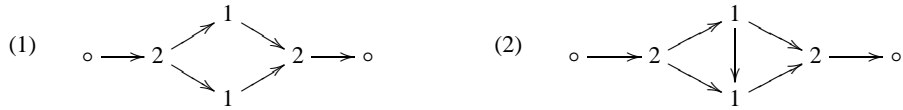

%


Weakness can be characterized in terms of the existence of a walk that 
passes at least twice through a 
{\em minimal st-separator} (or {\em mvs}, for minimal vertex separator, 
assuming $s$ and $t$ fixed). An mvs \cite{Golumbic80} is a 
minimal set of vertices whose removal disconnects $s$ and $t$.

\begin{theorem}
		\vspace*{-1mm}
\label{thm:weak-characterize}
A graph is weak if and only if there exists an mvs $T$ and a directed walk
$a\leadsto b$ with $a,b\in T$.
	\vspace*{-2mm}
\end{theorem}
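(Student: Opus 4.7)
\emph{Sufficiency.} Given an mvs $T$ and a walk $w:a\leadsto b$ with $a,b\in T$, I would construct a witness channel $\eta$. Exploiting the minimality of $T$, for every $v\in T$ pick an $st$-walk $\pi_v$ that meets $T$ at $v$ and only at $v$, visiting $v$ exactly once (shortcircuit any repeated passage through $v$). Let $\sigma_a$ be the prefix of $\pi_a$ ending at $a$ and $\tau_b$ the suffix of $\pi_b$ starting at $b$, and form the $st$-walk $w'=\sigma_a\cdot w\cdot\tau_b$. The key combinatorial observation is that the total multiplicity $M$ of $T$-visits in $w'$ is at least $2$: when $a\ne b$ both $a$ and $b$ are counted, while when $a=b$ the cycle $w$ contributes two glue copies of $a$ in $w'$. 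Writing $m_v$ for the multiplicity of $v\in T$ in $w'$, I would set $\eta(v)=m_v$ on $V(w')\cap T$, $\eta(v)=1$ on $T\setminus V(w')$, and $\eta(v)=\infty$ elsewhere. Sending one unit along $w'$ together with one unit along each $\pi_v$ for $v\in T\setminus V(w')$ gives a flow of value $1+|T\setminus V(w')|$ that saturates every vertex of $T$ and hence inhibits $\eta$; sending $\eta(v)$ units along each $\pi_v$ instead gives a valid flow (the $\pi_v$'s are pairwise $T$-disjoint) of value $\sum_{v\in T}\eta(v)=M+|T\setminus V(w')|$. The difference $M-1\ge 1$ yields $\mi_\eta<\ma_\eta$.

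\emph{Necessity.} I would argue the contrapositive. Assume no mvs $T$ admits a walk $a\leadsto b$ with $a,b\in T$; then every $st$-path must cross every mvs exactly once, because an $st$-path meeting some mvs $T$ at $u$ and later at $v$ (possibly $u=v$) would provide a forbidden walk. Fix any channel $\eta$ and any inhibiting flow $\phi$. The saturated set $\{v:\phi(v)=\eta(v)\}$ is an $st$-separator and hence contains some mvs $T$. Double-counting over $T$ using the single-crossing property,
\[
\sum_{v\in T}\eta(v)\;=\;\sum_{v\in T}\phi(v)\;=\;\sum_{p\in P(\eta)}\phi(p)\Bigl(\sum_{v\in T}\m v p\Bigr)\;=\;\sum_{p}\phi(p)\;=\;|\phi|,
\]
and the same telescoping shows $|\psi|\le\sum_{v\in T}\eta(v)$ for every flow $\psi$, so $\ma_\eta\le|\phi|$ and therefore $\mi_\eta=\ma_\eta$; hence $G$ is not weak.

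\emph{Main obstacle.} I expect the subtlety to lie in the sufficiency construction: the inhibiting flow has to saturate \emph{all} of $T$ (not merely the portion visited by $w'$) while remaining strictly cheaper than the maximum flow, which is why the charge $1$ is placed on each vertex of $T\setminus V(w')$ and why the extra units are routed along the $\pi_v$'s for $v\notin V(w')$. Everything then rests on the $\pi_v$'s being internally disjoint from $T\setminus\{v\}$, a direct consequence of the \emph{minimality} of the vertex separator $T$; without that property, routing along $\pi_v$ would waste charge at other vertices of $T$ and collapse the gap between $\mi_\eta$ and $\ma_\eta$.
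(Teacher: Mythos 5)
Your proof is correct and follows the same overall strategy as the paper's: for sufficiency, concentrate small charges on the mvs $T$ and push one unit along a composite walk $s\leadsto a\leadsto b\leadsto t$, comparing against the flow that crosses $T$ through each vertex separately; for necessity, the identical double-counting of $\sum_{v\in T}\m v p$ over an mvs contained in the saturated set. The one genuine difference is in the sufficiency construction. The paper sets $\eta\equiv 1$ on $T$ and reduces ``without loss of generality'' to an acyclic walk $r$ with $r\cap T=\{a,b\}$, which silently excludes the case where the only critical walk is closed, i.e.\ $a=b$ (for instance the mvs $\{u\}$ of graph $A$ in Fig.~\ref{fig:incl}, critical only via $u\to v\to u$; there the charge-$1$ assignment gives $\mi_\eta=\ma_\eta=1$ and witnesses nothing). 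Your assignment $\eta(v)=m_v$, the multiplicity of $v$ in $w'$, makes the one-unit flow along $w'$ feasible even when $w'$ revisits $T$ or returns to the same vertex, and your check that $M\ge 2$ is exactly what keeps the gap open; so your version covers the closed-walk case uniformly where the paper's needs a (missing) separate argument. Your necessity direction is also marginally more self-contained, since the bound $|\psi|\le\sum_{v\in T}\eta(v)$ falls out of the single-crossing property directly rather than being delegated to a min-cut--max-flow translation. In short: same route, but your sufficiency construction is the more robust of the two.
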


One direction is proved by assigning charge 1 to all nodes in $T$ and a big value
to the remaining ones; we then
exploit $a\leadsto b$ to saturate $T$ with a flow
of value smaller than 
$\ma_\eta$.
For the converse, the existence of an inhibiting flow of value smaller than
the maximum implies existence of an mvs saturated by this flow; this can only 
happen if there is a path passing twice through the mvs.

\vspace{-3mm}
\section{Checking Weakness}
\label{sec:complexity}
\vspace{-2mm}
In \cite{cgs09}, we proved that, given a graph
and a charge $\eta$ to its nodes, it is NP-hard to determine whether $\min_\eta \neq \max_\eta$. 
By contrast, we show in the following that checking weakness is a polynomial-time problem.

Stemming 
from Theorem~\ref{thm:weak-characterize}, a trivial algorithm to determine if a graph is weak 
is to generate all mvs's and, for each of them, check if there exists a walk that touches the mvs twice. 
Unfortunately, the number of mvs's in a graph can be 
exponential in $|V|$ \cite{KK98}. 
However, as we now show, it is enough to examine at most $|V|^2$ mvs's.

From now on, we will use the following terminology, inspired by Theorem~\ref{thm:weak-characterize}. 
A walk $a\leadsto b$ is {\em critical} if there exists an mvs $T$ such that $a,b\in T$. 
In such a case, we call $b$ a {\em critical node} and $T$ a {\em critical} (or even {\em b-critical}) mvs.
%

\begin{figure}[t]
	\begin{tabular}{cc}
		\begin{minipage}{0.46\textwidth}
			\center
			$
			\xymatrix@R=12pt@C=16pt{
				& a\ar[rr]\ar[dr] & & u\ar[dr] & \\
				s \ar[ur]\ar[dr] & & b\ar[ur] & & t\\
				& p\ar[r]\ar[ur] & c\ar[urr] & &\\
			}
			$
			\caption{A graph with a complete chain not containing a specific critical node}
			\label{fig:nobcritical}   
		\end{minipage}
		~
		&
		~
		\begin{minipage}{0.46\textwidth}
			\center
			$
			\xymatrix@R=12pt@C=16pt{
				& & a\ar[r]\ar[d] & a_2\ar[dr] & \\
				s \ar[urr]\ar[dr] & & a_1\ar[d] & & t\\
				& p\ar[r] & b\ar[urr] & &\\
			}
			$
			\vspace{-3mm}
			\caption{A graph with a complete chain not containing any critical mvs}
			\label{fig:compl-chain}   
		\end{minipage}
	\end{tabular}
	\vspace{-5mm}
\end{figure}

We say that a node $u$ is {\em covered} by a set of nodes $A$, notation $u\sqsubseteq A$, 
if all walks starting from $u$ to the sink $t$
contain at least a node $v\in A$. 
A set of nodes $A$ is {\em covered} by a set of nodes $A'$, notation $A\sqsubseteq A'$, 
if $u\sqsubseteq A'$, for all $u\in A$.
A set of nodes $A$ {\em precedes} a node $u$, notation $A\preceq u$, 
if all walks starting from the source $s$ to 
the node $u$ contain at least a node $v\in A$. 
A set of nodes $A$ {\em precedes} a set of nodes $A'$, notation $A\preceq A'$,
if $A\preceq u$, for all $u \in A'$.
Since mvs's are minimal sets of $st$-separators, the following claim can be easily proved.

\begin{lemma}
\vspace{-1mm}
\label{fact:one}
If $A\preceq a$ or $a\sqsubseteq A$, then $A\cup\{a\}\not\subseteq T$, for every mvs $T$.
\vspace{-2mm}
\end{lemma}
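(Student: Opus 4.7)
The plan is to argue by contradiction, exploiting the \emph{minimality} of $T$ (the ``m'' in mvs), which is the only hypothesis not yet used by the setup.

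First, I would suppose, toward a contradiction, that there exists an mvs $T$ with $A \cup \{a\} \subseteq T$. Since $T$ is a minimal $st$-separator, removing the vertex $a$ must destroy its separating property: there is an $st$-path $P$ that avoids every vertex of $T \setminus \{a\}$. On the other hand, $T$ itself still separates $s$ from $t$, so $P$ must intersect $T$, and the intersection can only consist of $a$ itself. Thus $P$ is an $st$-path through $a$ whose other vertices all lie outside $T$, and in particular outside $A$.

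Now I split on the two hypotheses. In the case $A \preceq a$, I would look at the prefix of $P$ from $s$ to the first occurrence of $a$: this is a walk from $s$ to $a$, so by definition of $\preceq$ it must contain some vertex $v \in A$. But $A \subseteq T\setminus\{a\}$, contradicting the fact that $P$ avoids $T \setminus \{a\}$. In the symmetric case $a \sqsubseteq A$, the same argument applies to the suffix of $P$ from (the last occurrence of) $a$ to $t$, which by definition of $\sqsubseteq$ must hit some $v \in A \subseteq T \setminus \{a\}$, again contradicting the choice of $P$.

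I do not anticipate a serious obstacle: the argument is essentially a one-line unfolding of minimality together with the definitions of $\preceq$ and $\sqsubseteq$. The only subtlety worth stating carefully is that $P$ is a \emph{walk} (possibly with repeated vertices), so one should speak of the first/last occurrence of $a$ along $P$ when extracting the prefix/suffix, to make sure the piece examined really is an $s \leadsto a$ or $a \leadsto t$ walk whose interior avoids $T \setminus \{a\}$.
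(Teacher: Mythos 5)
Your argument is correct and is exactly the intended one: the paper itself gives no proof of this lemma (it is dismissed with ``Since mvs's are minimal sets of $st$-separators, the following claim can be easily proved''), and your unfolding of minimality --- extract an $st$-walk avoiding $T\setminus\{a\}$, observe it must meet $T$ only in $a$, then contradict $A\preceq a$ (resp.\ $a\sqsubseteq A$) on the prefix (resp.\ suffix) at $a$ --- is precisely that easy proof, including the careful first/last-occurrence remark. The only caveat, which is a defect of the statement rather than of your proof, is that the argument (and indeed the lemma) tacitly requires $a\notin A$, as is the case in every application in the paper.
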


Moreover, it is well-known \cite{F72} that the set of mvs's of a graph is partially ordered 
w.r.t. $\sqsubseteq$, with minimum element $\{s\}$ and maximum element $\{t\}$.
A sequence of mvs's $T_0, T_1, \ldots, T_n$ is a {\em chain} if, for all $i$ ($0\leq i<n$), we 
have $T_i\sqsubset T_{i+1}$; it is {\em complete} if $T_0=\{s\}$, $T_n=\{t\}$ and 
$T_i\sqsubseteq T \sqsubseteq T_{i+1}$ 
implies $T=T_i$ or $T=T_{i+1}$,
for every mvs $T$.

From \cite{SL97}, it follows that a complete chain contains at most $|V|$ mvs's. 
Following the idea of finding a critical mvs and a critical walk by examining a complete chain of mvs's, 
we first observe that a specific critical node may never appear in a complete 
chain. To see this, consider the graph depicted in Fig.~\ref{fig:nobcritical}. 
In the mvs $\{a,b,c\}$, $b$ is a critical node, but it does not appear 
in the complete chain 
$\{s\}, \{a,p\}, \{u,p\}, \{u,c\}, \{t\}$ that,
however, contains the critical node $u$. This is not incidental, 
as the following theorem states.

\begin{theorem}
	\vspace{-1mm}
\label{thm:criticalNodes}
If the graph is weak, then   
in every complete chain of mvs's $T_0, T_1, \ldots, T_n$ there exists at least 
a $T_i$ that contains a critical node.
	\vspace{-2mm}
\end{theorem}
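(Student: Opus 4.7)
The plan is to exploit the critical walk guaranteed by weakness and trace its interaction with the chain. By Theorem~\ref{thm:weak-characterize}, the weakness assumption provides an mvs $T^\ast$ and a critical walk $a \leadsto b$ with $a, b \in T^\ast$. I extend this to an $st$-walk $w = s \leadsto a \leadsto b \leadsto t$ containing the critical sub-walk, and denote by $\alpha$, $\beta$ the positions of $a$, $b$ along $w$, with $\alpha < \beta$. The strategy is to examine how each $T_i$ meets $w$ and show that a critical node must appear in some $T_i$.

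Since each $T_i$ is an $st$-separator, it meets $w$ at least once. If any $T_i$ meets $w$ at two distinct positions $p < q$, then the sub-walk of $w$ from position $p$ to position $q$ is a critical walk whose endpoints both lie in $T_i$, which immediately places a critical node in $T_i$ and we are done. I may therefore assume each $T_i$ meets $w$ at a unique vertex $v_i$ at a unique position $p_i$. Using $T_i \sqsubset T_{i+1}$, which forces every walk from $v_i$ to $t$ to pass through $T_{i+1}$, and applying this to the tail of $w$ starting at $v_i$, one obtains $p_{i+1} \geq p_i$; hence the sequence $p_0 \leq p_1 \leq \cdots \leq p_n$ is weakly increasing, running from $0$ to the length of $w$.

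Since $b$ itself is a critical node (witnessed by $a \leadsto b$ inside $T^\ast$), if $\beta$ coincides with some $p_i$ then $b \in T_i$ and we are done. The remaining case is when $\beta$ is strictly skipped by the sequence $(p_i)$, i.e.\ there exists $i^\ast$ with $p_{i^\ast} < \beta < p_{i^\ast+1}$. Here I plan to show that $v_{i^\ast+1}$ itself is a critical node, by exhibiting another vertex $y \in T_{i^\ast+1}$ together with a walk $y \leadsto b \leadsto v_{i^\ast+1}$: the first segment shows that $y$ can reach the skipped vertex $b$, while the second is already supplied by $w$. The resulting walk $y \leadsto v_{i^\ast+1}$ with both endpoints in $T_{i^\ast+1}$ certifies that $v_{i^\ast+1}$ is a critical node contained in $T_{i^\ast+1}$.

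The main obstacle is precisely this last step: justifying the existence of such a $y \in T_{i^\ast+1}$ that can reach $b$. I expect the argument to combine the maximality of the chain (no mvs strictly between $T_{i^\ast}$ and $T_{i^\ast+1}$) with the existence of the ``parallel'' mvs $T^\ast$ containing $b$. Intuitively, if no vertex of $T_{i^\ast+1}$ could reach $b$, one could replace the relevant vertices of $T_{i^\ast+1}$ by a separator cutting ``just before $b$'' and obtain an mvs strictly between $T_{i^\ast}$ and $T_{i^\ast+1}$, contradicting completeness. Making this precise---either via a direct surgery on separators using Menger-type arguments, or through Escalante's distributive-lattice structure on mvs's (which makes the cover $T_{i^\ast} \lessdot T_{i^\ast+1}$ a minimal step one can analyze with respect to the incomparable element $T^\ast$)---is the technical heart of the proof.
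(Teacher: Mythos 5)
Your setup is sound as far as it goes: extending the critical walk to an $st$-walk $w$, observing that any $T_i$ meeting $w$ at two distinct positions is itself critical, deriving the monotonicity $p_0\le\cdots\le p_n$, and reducing to the case where the position $\beta$ of $b$ is strictly skipped between $p_{i^\ast}$ and $p_{i^\ast+1}$ are all correct steps. But the proof stops exactly where the theorem's content begins. The existence of a vertex $y\in T_{i^\ast+1}$ with $y\leadsto b$ is asserted only as something you ``expect'' to follow from completeness of the chain, and no argument is given. Moreover, the claim as stated is not obviously the right target: if no vertex of $T_{i^\ast+1}$ reaches $b$, the symmetric escape (a vertex of $T_{i^\ast}$ reachable from $b$, which would make $T_{i^\ast}$ critical instead) must also be considered; and if both fail, one lands precisely in the configuration $T_{i^\ast}\preceq b\sqsubseteq T_{i^\ast+1}$ with $b$ in neither, which is the genuinely hard case. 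So what you have is a correct reduction plus an unproved core, not a proof.

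For comparison, the paper attacks that hard case head-on and by contradiction: assuming no $T_i$ contains a critical node, it takes the index $i$ where $b$ switches from $b\not\sqsubseteq T_i$ to $b\sqsubseteq T_{i+1}$, sets $P=\fun{pred}^\ast(b)\cap T_i$, and lets $U$ be the set of nodes added in passing from $T_i$ to $T_{i+1}$. Completeness of the chain is used to show $T_i^p=T_{i+1}$ for every $p\in P$ (otherwise an intermediate mvs would exist between $T_i$ and $T_{i+1}$), whence every $u\in U$ is an immediate successor of every $p\in P$; non-criticality of $T_i$ and $T_{i+1}$ then gives $P\preceq b$ and $b\sqsubseteq U$. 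Finally, every mvs, in order to cut the paths $s\leadsto p\rightarrow u\leadsto t$, must contain either a set of nodes preceding $P$ or a set of nodes covering $U$, and Lemma~\ref{fact:one} shows that $b$ can belong to no mvs at all --- contradicting that $b$ is a critical node. If you wish to salvage your walk-tracing framework, this $P$/$U$ analysis is essentially the missing lemma you would have to supply; as it stands, the technical heart of the theorem is absent from your argument.
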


Unfortunately, Theorem~\ref{thm:criticalNodes} is not enough to conclude that 
any complete chain of mvs's in a weak graph contains at least a critical mvs. 
To see this, let us consider the graph of Fig.~\ref{fig:compl-chain}:
it is weak, as the critical mvs $\{a,b\}$ testifies. However, if we consider the complete chain 
$C'=\{s\}, \{a,p\}, \{a_1,a_2, p\}, \{a_2,b\}, \{t\}$, we cannot find any critical mvs;
$C'$ just contains the non-critical mvs $\{a_2,b\}$ that contains the critical node $b$. 

Theorem~\ref{thm:bminimal} ensures that we can check if any node $b$ is critical 
(and hence check the existence of a $b$-critical mvs) 
by considering any {\em $b$-minimal mvs}, i.e.,
an mvs $T$ such that $b \in T$ and $b\not\in T'$, for every mvs $T'\sqsubset T$.

\begin{theorem}
		\vspace{-1mm}		
\label{thm:bminimal}
If $b$ is a critical node, then all $b$-minimal mvs's are $b$-critical.
\vspace{-2mm}
\end{theorem}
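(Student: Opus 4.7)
The aim is to transfer a witness of $b$'s criticality into one that lies inside $T$. Let $T^{*}$ be a $b$-critical mvs with witnessing walk $W^{*}\colon a^{*}\leadsto b$ where $a^{*},b\in T^{*}$ (such a pair exists since $b$ is critical). If $a^{*}=b$, then $W^{*}$ is a non-trivial cycle through $b$, and because $b\in T$ that very cycle shows $T$ is $b$-critical with $a=b$. Otherwise $a^{*}\ne b$; since $T^{*}$ is an mvs, $a^{*}$ lies on some $st$-path, so I pick a walk $s\leadsto a^{*}$ and concatenate with $W^{*}$ and with any walk $b\leadsto t$ to form an $st$-walk $W\colon s\leadsto a^{*}\leadsto b\leadsto t$. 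Because $T$ is an mvs, $W$ must meet $T$; let $c$ be the first vertex of $T$ encountered along $W$. Since $b\in T$ lies on $W$, $c$ occurs at or before the first occurrence of $b$ on $W$. If $c\ne b$, the sub-walk of $W$ from $c$ to $b$ witnesses $T$ being $b$-critical, and we are done.

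The only remaining case is $c=b$: the walk $W$ enters $T$ for the first time exactly at $b$, so in particular there is an $s$-to-$b$ walk in $G$ that avoids $T\setminus\{b\}$. I would rule out this case by contradicting the $b$-minimality of $T$, exhibiting an mvs $T'\sqsubset T$ with $b\in T'$. Let $S$ be the set of vertices reachable from $s$ by walks in $G$ that avoid $T$; each $v\in S$ satisfies $v\sqsubseteq T$ (concatenating an $s$-to-$v$ witness avoiding $T$ with any walk $v\leadsto t$ yields an $st$-walk that must hit $T$, and the hit cannot occur in the first segment), and by minimality of $T$ each $v\in T$ admits an in-neighbor in $S$. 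For $v\in T\setminus\{b\}$ let $N_{v}$ be the in-neighbors of $v$ lying in $S$, and set $T'_{0}=\{b\}\cup\bigcup_{v\in T\setminus\{b\}}N_{v}$. Tracing any $st$-walk to its first meeting with $T$ shows $T'_{0}$ is an $st$-separator: if that meeting is at $b$ the walk is blocked at $b$; if at some $v\in T\setminus\{b\}$ then the predecessor along the walk belongs to $N_{v}$, blocking it there. Trim $T'_{0}$ to a minimal $st$-separator $T'$. Since $T'\setminus\{b\}\subseteq S$ we get $T'\sqsubseteq T$, and since $T'$ is disjoint from $T\setminus\{b\}$ (and $T\ne\{b\}$ in this case, as otherwise $c=b$ forces us into the cycle sub-case) we have $T'\ne T$, hence $T'\sqsubset T$. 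Combined with $b\in T'$, this contradicts $b$-minimality.

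The main obstacle is ensuring $b$ actually survives the trimming, i.e., that $T'_{0}\setminus\{b\}=\bigcup N_{v}$ is not already an $st$-separator of $G$. The intended witness is $W$: its initial segment $s\leadsto(\text{predecessor of }b)$ lies in $S$, and after possibly refining $W$ (or, equivalently, restricting $N_{v}$ to the \emph{first-reached} in-neighbors, those reachable from $s$ without first crossing any other $N_{v'}$) this segment can be arranged to avoid $\bigcup N_{v}$; composing with the step to $b$ and with a walk $b\leadsto t$ through the downstream side of $T$ (which exists because from $b\in T$ one can reach $t$ through vertices whose walks to $t$ no longer re-enter $S$) produces an $st$-walk avoiding $T'_{0}\setminus\{b\}$, certifying $b$'s indispensability. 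The bookkeeping needed to guarantee that this refinement always succeeds---keeping the witness walk clean as the separator is pushed strictly upstream---is the most delicate step of the argument.
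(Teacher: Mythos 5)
Your reduction to two cases is sound up to the point where $W$ first meets the $b$-minimal mvs $T$ at some $c\neq b$: there the sub-walk $c\leadsto b$ does witness $b$-criticality. The fatal gap is the case $c=b$, which you propose to close by showing that an $s$-to-$b$ walk avoiding $T\setminus\{b\}$ contradicts $b$-minimality. That implication is false, so no refinement of the trimming step can rescue it. Take $V=\{s,c,a,b,t\}$ with edges $s\rightarrow c$, $s\rightarrow b$, $c\rightarrow b$, $c\rightarrow a$, $b\rightarrow a$, $a\rightarrow b$, $a\rightarrow t$, $b\rightarrow t$. The mvs's are $\{s\}$, $\{c,b\}$, $\{a,b\}$, $\{t\}$; since the walk $a\rightarrow t$ avoids $\{c,b\}$, the only mvs strictly below $T=\{c,b\}$ is $\{s\}$, so $T$ is $b$-minimal, and yet the edge $s\rightarrow b$ reaches $b$ while avoiding $T\setminus\{b\}=\{c\}$. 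Your bad case is realized concretely: choosing $T^*=\{a,b\}$ with witness $a\rightarrow b$ and prefix $s\rightarrow b\rightarrow a$ gives $W=s\,b\,a\,b\,t$, whose first vertex in $T$ is $b$. Here $T$ \emph{is} $b$-critical (via $c\rightarrow b$), but that witness has nothing to do with $W$, and there is no smaller mvs containing $b$ to be found: your set $S$ of vertices reachable from $s$ while avoiding $T$ is just $\{s\}$, so $T'_0=\{s,b\}$ trims to $\{s\}$ and $b$ does not survive --- exactly the obstacle you flag, and it is unavoidable because the statement you are trying to contradict is true.

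So in the case $c=b$ you cannot hope for a contradiction; you must instead exhibit a different witness pair inside $T$, and nothing in your walk-tracing setup supplies one (choosing a ``better'' $T^*$ or prefix begs the question, since the existence of a good choice is essentially the theorem). The paper's proof works in the opposite direction: it assumes some $b$-minimal mvs $T^*$ is \emph{not} $b$-critical and considers the backward refinements $T^*_{a_i}$ for the vertices $a_i\in T^*$ reachable from $a$; $b$-minimality forces $b$ to drop out of each of these, which yields predecessor sets preceding $b$, and Lemma~\ref{fact:one} then shows $a$ and $b$ can never share an mvs, contradicting the criticality of $b$. An argument of that global flavour is what your second case is missing.
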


In the example of Fig.~\ref{fig:compl-chain}, it suffices to consider 
the mvs $\{a_2,b\}$: its $b$-minimal predecessor (w.r.t. $\sqsubset$) is $\{a,b\}$, that is
$b$-critical. 

Theorems~\ref{thm:criticalNodes} and \ref{thm:bminimal} are the main ingredients of 
our polynomial algorithm for checking graph weakness (function \fun{weakOrNotWeak} in 
Alg.~\ref{alg:wonw2}). 
Our algorithm generates a complete chain of mvs's, that contains at most 
$|V|$ mvs's. By Theorem~\ref{thm:criticalNodes}, we know that the graph
is weak if and only if at some point an mvs with a critical node appears in the chain.
Thus, for every new node $b$ appeared in all such mvs's, it computes 
a $b$-minimal mvs (by backward generating another complete chain of at most $|V|$ mvs's)
and it checks if it is $b$-critical. 
Thanks to Theorem~\ref{thm:bminimal}, this suffices to conclude.
\begin{algorithm}[t]
  \caption[weak or not weak]
  {Checking Weakness}
  \label{alg:wonw2}
  \begin{algorithmic}[1]
    \REQUIRE
    {A directed $st$--graph $G = (V,E,s,t)$, $s$ is the source and $t$ is the sink}
    \ENSURE {\fun{weakOrNotWeak}$(G)$} 
   \STATE $T\gets \{s\}$; 
	\WHILE{$T\not=\{t\}$}
    \STATE $T' \gets\fun{immediateMvsRight}(T)$
	\IF{$T'$ is critical}
	\STATE {{\bf return} {\sc Weak}}
    \ENDIF
	\FORALL{$b\in T'\setminus T$}
	\STATE $T^\ast=\fun{minimalMvs}(T', b)$
	\IF{$T^\ast$ is critical}
	\STATE {{\bf return} {\sc Weak}}
    \ENDIF
    \ENDFOR
    \STATE $T\gets T'$
    \ENDWHILE
	\STATE {\bf return} {\sc not Weak}
\end{algorithmic}
\end{algorithm}  

\begin{algorithm}[t]
  \caption[compute minimal mvs]
  {A $b$-minimal mvs smaller (w.r.t. $\sqsubset$) than $T$}
  \label{alg:mmvs}
  \begin{algorithmic}[1]
    \REQUIRE
    {An mvs $T$, a node $b\in T$}
    \ENSURE {\fun{minimalMvs}$(T,b)$} 
   \STATE $A\gets \{ u\in T~|~(T \cup \fun{pred}(u)) \setminus \{b\} \not\preceq b \}$
	\WHILE{$A\not=\varnothing$}
    \STATE \fun{choose} $u\in A$
    \STATE $T\gets T_u$
   \STATE $A\gets \{ u\in T~|~(T \cup \fun{pred}(u)) \setminus \{b\} \not \preceq b \}$
    \ENDWHILE
   	\STATE {\bf return} $T$
\end{algorithmic}
\end{algorithm}  

Function \fun{immediateMvsRight} 
adapts the work in \cite{SL97} to build a complete chain of mvs's, 
generated from the bottom mvs $T=\{s\}$ by iteratively proceeding as follows. 
Given an mvs $T$, pick up any vertex $u\in T$
and replace it with its immediate successors, denoted by $\fun{succ}(u) = \{v \in V : u \rightarrow v \in E\}$. 
The set $S=(T\setminus\{u\})\cup\fun{succ}(u)$ is a
separator, though not necessarily minimal: some vertices could be covered by other ones. 
Thus, we consider 
$T^u = S \setminus{\cal I}_t(S)$, 
where 
${\cal I}_t(S)$ 
contains all the vertices $v$ such that $v \sqsubset 
S \setminus \{v\}$; in particular, we always have $u \in {\cal I}_t(S)$.
As shown in \cite{SL97}, $T^u$ is an mvs and $T \sqsubset T^u$. 

As an example, let us consider the mvs $T=\{a,p\}$ in Fig.~\ref{fig:nobcritical}: 
$(T\setminus\{a\})\cup\fun{succ}(a)=\{b,u,p\}$ is 
not minimal. 
$T^a$ is $\{u,p\}$, since the node $b$ is covered by $u$.  

The mvs $T^u$ is not necessarily an immediate successor (w.r.t. $\sqsubset$) of $T$.
To obtain an immediate successor of $T$ (and thus build a complete chain), it suffices to consider 
the mvs $T'=\min_{u\in T}T^u$, where the minimum is calculated w.r.t. $\sqsubset$. 
Indeed, \cite{SL97} shows that all the immediate successors of an mvs $T$ can
be obtained as $T^u$, for some $u \in T$.

As an example, let us consider again the graph in Fig.~\ref{fig:nobcritical} and 
let $T$ be the mvs $\{a,b,c\}$. We have that $T^a=T^b=\{u,c\} \sqsubset T^c=\{t\}$.  

A $b$-minimal predecessor (w.r.t. $\sqsubset$) of $T$ is computed by function 
\fun{minimalMvs} in Alg.~\ref{alg:mmvs}. 
Given an mvs $T$ and a node $u\in T$, $T_{u} = (T\cup\fun{pred}(u))\setminus{\cal I}_s(T\cup\fun{pred}(u))$ 
is the analogue
of $T^u$ in the backward direction, where $\fun{pred}(u) = \{v \in V : v \rightarrow u \in E\}$ is the set
of immediate predecessors of $u$.  
The correctness of function 
\fun{minimalMvs} is given by the following result.

\begin{theorem}
\label{thm:bminimalChar}
$T$ is $b$-minimal if and only if $(T \cup \fun{pred}(u)) \setminus \{b\} \preceq b$, for every $u \in T$.
\end{theorem}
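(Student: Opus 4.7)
My plan is to reformulate the condition in terms of membership of $b$ in the mvs $T_u$. Since $b \in T \subseteq T \cup \fun{pred}(u)$, the condition $(T \cup \fun{pred}(u)) \setminus \{b\} \preceq b$ is exactly the defining condition for $b \in {\cal I}_s(T \cup \fun{pred}(u))$, hence for $b \notin T_u$. The theorem therefore restates as: \emph{$T$ is $b$-minimal iff $b \notin T_u$ for every $u \in T$}. I would prove this reformulation using the backward analogue of the result of \cite{SL97} invoked just before the statement --- namely that each $T_u$ is an mvs with $T_u \sqsubset T$, and that every immediate $\sqsubset$-predecessor of $T$ has the form $T_u$ for some $u \in T$.

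The $\Rightarrow$ direction is then immediate: if $T$ is $b$-minimal, no mvs strictly below $T$ contains $b$, so in particular $b \notin T_u$ for each $u \in T$.

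For $\Leftarrow$ I would argue by contrapositive. Assume $T$ is not $b$-minimal and pick an mvs $T^{*} \sqsubset T$ with $b \in T^{*}$ that is $\sqsubset$-maximal among such. It then suffices to show that $T^{*}$ is an immediate predecessor of $T$, because the cited backward analogue then gives $T^{*} = T_u$ for some $u \in T$, and $b \in T_u$ contradicts the hypothesis. If $T^{*}$ were not immediate, some mvs $T'$ would satisfy $T^{*} \sqsubset T' \sqsubset T$, and by maximality of $T^{*}$ we would have $b \notin T'$. I would reach a contradiction by using the distributive-lattice structure of mvs's (\cite{F72}) restricted to the interval $[T^{*},T]$: in this interval the subfamily of mvs's containing $b$ already contains $T^{*}$ and $T$, and one shows --- via the explicit graph-theoretic description of joins and meets, in terms of which nodes first or last separate $s$ from $t$ on chosen families of walks --- that this subfamily is closed under meet and join, and hence contains some strictly intermediate element, contradicting the maximality of $T^{*}$.

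The main obstacle is exactly verifying this closure property for meet and join: the abstract distributive-lattice picture from \cite{F72} is not sufficient by itself, and must be combined with the concrete description of how $b$ survives under the mvs-operations. Once that step is handled, the rest of the proof is a routine chase through the dual of the $T^u$ construction already set up earlier in the section.
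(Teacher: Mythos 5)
Your reduction of the statement to ``$b \notin T_u$ for every $u \in T$'' is exactly the paper's first step, and your $\Rightarrow$ direction matches the paper's: each $T_u$ is an mvs with $T_u \sqsubset T$, so $b$-minimality forces $b \notin T_u$. The paper then disposes of the converse in one line by asserting that every $\sqsubset$-predecessor of $T$ arises as some $T_u$; you instead invoke only the weaker (and correctly cited) fact that every \emph{immediate} predecessor has the form $T_u$, and therefore must bridge from ``some mvs strictly below $T$ contains $b$'' to ``some immediate predecessor of $T$ contains $b$''. That bridge is where your argument breaks.

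The concrete flaw: having fixed a $\sqsubset$-maximal $T^{*}$ with $b \in T^{*} \sqsubset T$, you propose to derive a contradiction from the claim that the family of $b$-containing mvs's in the interval $[T^{*},T]$ is closed under meet and join, concluding that it ``hence contains some strictly intermediate element''. That last inference is a non sequitur: the two-element chain $\{T^{*},T\}$ is already closed under meet and join, since $T^{*}$ and $T$ are comparable and so $T^{*}\wedge T=T^{*}$ and $T^{*}\vee T=T$; closure produces nothing strictly between them. What you actually need is either order-convexity of the $b$-containing mvs's (if $b$ lies in two comparable mvs's then it lies in every mvs between them) or a direct argument that a $\sqsubset$-maximal $b$-containing predecessor of $T$ must be an immediate predecessor; neither follows from the sublattice property you flag as your ``main obstacle'', so even resolving that obstacle would not close the proof. (For what it is worth, the paper's own justification is thinner than it looks, since \cite{SL97} only gives the immediate-predecessor statement; but your proposed repair does not work as written and the missing convexity/immediacy argument is the real content of the converse direction.)
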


\noindent
{\bf Analysis\ }
First of all, we can calculate the reachability relation for every pair of vertices
in ${\cal O}(|V|^3)$, to fill in a $|V|\times |V|$ binary matrix that allows us to
check whether $T$ is critical in ${\cal O}(|V|^2)$ in Alg.~\ref{alg:wonw2}.
Second, relations $T' \sqsubseteq T$ and $T \preceq T'$ 
(in \fun{immediateMvsRight} and \fun{minimalMvs})
can be calculated in ${\cal O}(|V|^2)$ by calculating the connected component $C$ of the sink (for $\sqsubseteq$)
or of the source (for $\preceq$) in the subgraph $G[V\setminus T]$; the desired relation 
holds if and only if $T' \cap C = \varnothing$.
Third, given $T$ and $u \in T$, the mvs $T^u$ 
can be calculated in ${\cal O}(|V|^2)$,
by following \cite{SL97}: first, calculate the connected component $C$ of the sink in the subgraph 
$G[V\setminus (T \cup \fun{succ}(u))]$; then, ${\cal I}_t(T\cup\fun{succ}(u))$ is the subset of
$T \cup \fun{succ}(u)$ without an immediate successor in $C$.
Finally, $T_u$ in Alg.~\ref{alg:mmvs} can be computed symmetrically, by considering the connected component of the source
in $G[V\setminus (T \cup \fun{pred}(u))]$ and by excluding all vertices without an immediate predecessor in $C$.

Thus, function \fun{immediateMvsRight} costs ${\cal O}(|V|^3)$
since it computes at most $|V|$ mvs's to find $\min_{u\in T}T^u$.
%
Function \fun{minimalMvs} in Alg.~\ref{alg:mmvs} costs ${\cal O}(|V|^4)$: 
we need ${\cal O}(|V|)$ iterations of the {\bf while} of line 2 and
each iteration costs ${\cal O}(|V|^3)$, since, for every $u \in T$, we have to check relation $\preceq$.

Finally, Alg.~\ref{alg:wonw2} costs ${\cal O}(|V|^5)$, since  
function \fun{minimalMvs} is invoked at most once for each node in $V$. 
Indeed, if $b$ appears as a new node in $T'\setminus T$ (line 6 of Alg.~\ref{alg:wonw2}),
it cannot have already appeared in a $T'' \sqsubset T'$ of the chain, 
otherwise there would exist a walk from $b$ to $b$ and hence
$T''$ (as well as $T'$) would be critical and consequently function \fun{weakOrNotWeak} 
would have terminated in line 5 returning {\sc Weak} as a result.

%

\vspace{-3mm}
\section{Inefficiency in Flow Networks: Edge-weakness}
\label{sec:edgeweak}
\vspace{-2mm}

Interestingly, the notion of weakness strongly depends on the fact that
we assign charges to nodes, and not capacities to edges. Indeed, 
in several settings the two models are interchangeable \cite{amo93}. 
However, the analogous of the notion of 
weakness in the standard flow network model with capacities on edges, 
that we call {\em edge-weakness}, does not correspond to weakness.  

We first briefly recall some standard notions of flow networks \cite{amo93}.
First, we denote with ${\sl out}(u)$ the set of all edges whose first
component is $u$ and ${\sl in}(u)$ the set of all edges whose second component is $u$.
A {\em flow network} is a graph $G = (V,E)$ endowed with a capacity 
function $\{c_e\}_{e \in E}$, assigning a non-negative number ($c_e\in\mathbb{R}^+$) to every edge.
%
A {\em flow} in such a network is a function $f: E \rightarrow \mathbb{R}^+$ such that
\vspace*{-.2cm}
\begin{itemize}
\item $\forall e \in E.\, 0 \leq f(e) \leq c_e$, and
\item $\forall u \in V \setminus\{s,t\}.\, \sum_{e \in {\sl in}(u)} f(e) = \sum_{e \in {\sl out}(u)} f(e)$.
\vspace*{-.2cm}
\end{itemize}
The {\em value} of a flow $f$, written $|f|$, is defined as $\sum_{e \in {\sl out}(s)} f(e)$ and it turns out
to be equal to $\sum_{e \in {\sl in}(t)} f(e)$.

A flow $f$ {\em saturates} a network if, for every path, 
there exists an edge $e$ belonging to that path such that $f(e) = c_e$.
The standard problem in flow networks is to find a saturating flow with maximum value. 
%
By mimicking Def.~\ref{def:weak}, we give the following definition of edge-weak graph:

\begin{definition}
\label{def:edgeweak}
A graph is {\em edge-weak} if there exists a capacity assignment to edges 
such that the resulting flow network admits a non-maximum saturating flow. 
\vspace{-1mm}
\end{definition}


In general, we can calculate a flow in a network by
non-deterministically choosing paths and saturating them. It is easy to see that
such an algorithm always calculates a maximum flow if and only if
the graph is not edge-weak. In this sense, edge-weakness can be
considered another form of inefficiency in network design: to calculate a maximum flow, we cannot
use a simple iterated DFS but we need more complex algorithms (see \cite{amo93}).
Something similar happens in \cite{BBT85}. In that model, edges are also equipped with 
a cost function and the problem is to find a flow of a given value (between 0 and the maximum) 
but with the lowest possible cost. The authors prove that a greedy algorithm always calculates 
the minimum cost flow if and only if the underlying graph is series-parallel. Incidentally, we
will prove that the notion of series-parallel and non-edge-weakness coincide for acyclic
graphs (see Theorem \ref{thm:vulweak} later on).

We now graph-theoretically characterize the notion of edge-weakness. To this aim,
recall that a cut of a graph $G$ is a bipartition of its vertices $(S,T)$ such that $s \in S$ and
$t \in T$; moreover, its {\em cut-set} is the set of edges $(u,v)$ with $u \in S$ and $v \in T$.
We call {\em connected} a cut $(S,T)$ where  
every node in $S$
can be reached from $s$ without touching nodes in $T$ and every node in $T$ can reach $t$
without touching nodes in $S$. 

\begin{theorem}
	\vspace{-1mm}
\label{thm:fragile}
$G$ is edge-weak if and only if there exists a walk passing at least twice through the cut-set 
of some connected cut.
	\vspace{-2mm}
\end{theorem}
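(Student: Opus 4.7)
My plan is to handle the two directions separately. For the ``if'' direction, I would assume a walk $w$ from $s$ to $t$ that passes through the cut-set $C$ of a connected cut $(S,T)$ at least twice, and then design capacities and a saturating non-maximum flow as follows. I would assign capacity $1$ to every edge in $C$ (or capacity $m_e$ if the edge is traversed $m_e > 1$ times by $w$), and a large capacity $M$ to every edge outside $C$. Using the connectedness of $(S,T)$, the maximum flow equals the total capacity of $C$, achieved by routing, for each $e \in C$, one unit along a canonical path $s \leadsto \mathrm{tail}(e) \to \mathrm{head}(e) \leadsto t$ whose two halves live inside $G[S]$ and $G[T]$ respectively. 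For a strictly smaller saturating flow, I would send one unit along $w$---which saturates at least two $C$-edges simultaneously with a single unit of flow---and one additional unit along a direct path for every $C$-edge not used by $w$. This saturates every edge of $C$, hence every $s$-$t$ path (since any such path must cross $C$), but has value strictly less than the maximum.

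For the ``only if'' direction, I would start with capacities $\{c_e\}$ and a saturating flow $f$ with $|f| <$ max flow. The key construction is to let $S$ be the set of vertices reachable from $s$ via unsaturated edges and $T = V \setminus S$; then every cut-set edge is necessarily saturated, for otherwise its head would also lie in $S$. Combined with $|f| <$ max flow $\le c(S,T)$ and the flow-conservation identity $|f| = c(S,T) - \sum_{u \in T,\, v \in S} f(u,v)$, this forces some edge $e' = (u,v)$ with $u \in T$ and $v \in S$ to carry positive flow. Decomposing $f$ into paths and cycles, I would extract an $s$-$t$ flow path $P$ traversing $e'$: since $P$ starts in $S$, uses the backward edge $e'$ in the middle, and ends in $T$, it must enter $T$ at least twice via cut-set edges, producing the required double crossing.

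The main obstacle I anticipate is verifying that $(S,T)$ is a \emph{connected} cut in the paper's sense. The $S$-side is immediate: any path from $s$ in the unsaturated subgraph avoids $C$ and therefore remains inside $S$. The delicate part is showing that every $u \in T$ reaches $t$ within $G[T]$, which does not follow directly from the construction. I expect to handle this either by a symmetric refinement---iteratively adjusting $S$ and $T$ so as to discard $T$-nodes that fail to be $t$-connected in $G[T]$, while preserving the $T$-to-$S$ edge $e'$ and hence the flow-path argument---or by choosing $T$ directly as the set of nodes reaching $t$ via unsaturated edges and then reconciling the two definitions on the slice relevant to the flow path. Making this cut-repair step interact cleanly with the flow-path decomposition will be the technical heart of the proof.
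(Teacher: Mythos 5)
Your ``if'' direction is sound and is essentially the paper's argument: both proofs put small capacities on the cut-set $C$ and huge ones elsewhere, identify the maximum flow with $c(C)$ (you by an explicit routing through $G[S]$ and $G[T]$ using connectedness, the paper by showing $(S,T)$ is a minimum cut via the fact that no cut-set is properly contained in that of a connected cut), and then undercut it by pushing one unit along the doubly-crossing walk; the accounting $1+|C\setminus w| < \sum_{e\in C\cap w} m_e + |C\setminus w|$ goes through.

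The ``only if'' direction contains a genuine gap, and it is exactly the one you flag but do not close: nothing makes your cut $(S,T)$ \emph{connected} on the $T$ side, and neither proposed repair is shown to work. If you move a vertex $u\in T$ that cannot reach $t$ inside $G[T]$ over to $S$, the edges from $u$ into the remaining $T$ become new cut-set edges with no guarantee of being saturated, so the identity $|f|=c(S,T)-\sum_{e:T\to S}f(e)$ that produced the backward edge is lost (and $u$ need not be reachable from $s$ inside the enlarged $S$ either); defining $T$ by backward unsaturated reachability merely transfers the same defect to the $S$ side. The paper takes a different route here: it argues by contradiction, growing the cut one vertex at a time from $(\{s\},V\setminus\{s\})$ by adding the head of an unsaturated cut-set edge, so that either the process halts at a connected cut whose cut-set is saturated, or it exhibits an unsaturated $s$--$t$ path, contradicting that $f$ saturates the network. (To be fair, the paper too only asserts the $T$-side connectedness of its intermediate cuts, so this really is the delicate point of the theorem.) A smaller issue: your flow decomposition may place the backward edge $e'=(u,v)$ on a cycle rather than on an $s$--$t$ flow path, so ``extract an $s$-$t$ flow path traversing $e'$'' can fail as stated; but you do not need the decomposition at all --- the mere existence of the edge $(u,v)$ with $u\in T$ and $v\in S$, together with the standing assumption that every vertex lies on some $st$-path, already yields a walk $s\leadsto u\to v\leadsto t$ that crosses the cut-set twice.
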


The proof is similar to the proof of Theorem \ref{thm:weak-characterize}.
Just notice that working with connected cuts ensures a minimality property
on the associated cut-set: given a connected cut $(S,T)$, no cut has a cut-set properly
contained in the cut-set of $(S,T)$. This somehow corresponds to the minimality
condition underlying an mvs.

\vspace{-2mm}
\section{Inefficiency in Traffic Networks: Vulnerability}
\label{sec:vuln}
\vspace{-2mm}

{\em Traffic Networks} \cite{braessFormal,BI97} provide a model for studying selfish routing: 
non-cooperative agents travel from a source node $s$ to 
a destination node $t$. Since the cost (or latency) experienced by an agent 
while travelling along a path depends on network congestion
(and hence on routes chosen by other agents), 
traffic in a network stabilizes to the equilibrium of 
a non-cooperative game, where all agents experience the same latency. 
This phenomenon has been defined by Wardrop \cite{War52} in the
contest of transport analysis.

In the following, 
we essentially follow the presentation in~\cite{Rough06}.
%
Here, a \emph{flow} for a graph $G=(V, E)$ is a function 
$\varphi:P(G) \rightarrow \mathbb{R}^+$, where $P(G)$ is the set of paths in $G$.
A flow induces a unique flow on edges: for any edge $e\in E$, 
$\varphi(e)=\sum_{p\in P(G): e\in p}\varphi(p)$. 
Since we do not have capacities on
edges (
as in the standard flow networks) or charges on nodes
(as in Depletable Channels), in this model  
a flow is simply a function assigning non-negative reals to paths, without
any further constraint.

A {\em latency function} $l_e:\mathbb{R}^+\rightarrow \mathbb{R}^+$ 
assigns to each edge $e$ a latency that depends on the flow on it;
as usual, we only consider continuous and non-decreasing latency functions.
The latency of a path $p$ under a flow $\varphi$ is the sum of the latencies of 
all edges in the path under $\varphi$, i.e., $l_p(\varphi)=\sum_{e\in p} l_e(\varphi(e))$.
If $H$ is a subgraph of $G$, we denote with 
$l|_H$ the restriction of the latency function $l$ on the edges of $H$.

Given a graph $G$, a real number $r\in\mathbb{R}^+$ and a latency function $l$, 
we call the triple $(G, r, l)$ an {\em instance}. A flow $\varphi$ is {\em feasible} for 
$(G, r, l)$ if the value of $\varphi$ is $r$. Notice that, since we do not have any
constraint on edges or vertices, every $r$ admits at least one feasible flow.

A flow $\varphi$ feasible for $(G, r, l)$ is at {\em Wardrop equilibrium} 
(or is a {\em Wardrop flow}) if, for all pairs of paths $p, q\in P(G)$ 
such that $\varphi(p)>0$, we have $l_p(\varphi)\leq l_q(\varphi)$.
In particular, this implies that, if $\varphi$ is a Wardrop flow, 
all paths to which $\varphi$ assigns a positive flow
have the same latency. It is known \cite{Rough06} that every
instance admits a Wardrop flow and that different Wardrop flows for the same
instance have the same latency along the same path. Thus,
we denote with $L(G, r, l)$ the latency of all paths 
with positive flow at Wardrop equilibrium. In the special case where $r = 0$,
we let $L(G, r, l)$ be 0.

{\em Braess's paradox} \cite{braessFormal,braessOriginal} originates when latency 
at Wardrop equilibrium decreases because of removing edges (or equivalently, by raising 
the latency function on edges): an instance $(G, r, l)$ suffers from Braess's paradox 
if there is a subgraph of $G$ with a lower latency.
Fig.~\ref{fig:wheatstone} shows the {\em Wheatstone network}, a minimal example of Braess's paradox.
A Wardrop flow of value 1 assigns all the flow to the path $s\,u\,v\,t$ in the picture. 
The latency in such a case is 2.   
In Fig.~\ref{fig:optimalSubgraph}, we show the optimal subgraph: in this case, 
a Wardrop flow of value 1 assigns 
\ifdan $\frac{1}{2}$ \else $\sfrac{1}{2}$ \fi
to both paths in the network, 
thus obtaining a latency of 
\ifdan $\frac{3}{2}$. \else $\sfrac{3}{2}$.\fi
%

	\vspace{-2mm}
\begin{definition}
	A graph $G$ is {\em vulnerable} if there exist a value $r$, a latency function
	$l$ and a subgraph $H$ of $G$ such that $L(G, r, l)>L(H, r, l|_H)$.
	\vspace{-1mm}
\end{definition}

\begin{figure}[t]
\begin{tabular}{cc}
\begin{minipage}{0.46\textwidth}
\center
$
\xymatrix@R=12pt@C=18pt{
   & u \ar[rd]^1\ar[dd]^0\\
s \ar[ur]^x \ar[dr]_1 && t\\
   & v \ar[ru]_x
   }
$
\caption{The Wheatstone network}
\label{fig:wheatstone}   
\end{minipage}
~
&
~
\begin{minipage}{0.46\textwidth}
\center
$
\xymatrix@R=12pt@C=18pt{
   & u \ar[rd]^1\\
s \ar[ur]^x \ar[dr]_1 && t\\
   & v \ar[ru]_x
   }
$   
\caption{Wheatstone optimal subgraph}
\label{fig:optimalSubgraph}   
\end{minipage}
\end{tabular}
\vspace{-3mm}
\end{figure}

A characterization of vulnerable undirected graphs is presented in~\cite{Milch06}. 
In particular, in~\cite{Milch06} it is proved that an undirected graph is vulnerable if and only if it contains (the undirected version of) the Wheatstone network.
We now show that the same result holds also for directed graphs, 
thus answering to Open Question 1 in Sect. 6.1 of~\cite{Rough06}.
To formally state our result, recall that \cite{VTL82}
$G$ contains a subgraph homeomorphic to $H$ 
if $H$ can be obtained from $G$ by a sequence of the following operations:
\vspace*{-.3cm}
\begin{itemize}
\item remove an edge;
\item replace $(u,v)$ and $(v,w)$ with $(u,w)$ and delete vertex $v$, 
whenever $(u,v)$ is the only edge entering into $v$ 
and $(v,w)$ is the only edge leaving $v$.
\end{itemize}
\vspace*{-.2cm}
Let us call $W$ the graph in Fig.\ref{fig:example}(b), i.e. the graph underlying the Wheatstone network.

\vspace{-2mm}
\begin{theorem}
\label{thm:vuln}
$G$ is vulnerable if and only if it contains a subgraph 
homeomorphic to $W$. 
\vspace{-2mm}
\end{theorem}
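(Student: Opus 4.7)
I would prove the two directions separately. For the \emph{if} direction, assume $G$ contains a subgraph $G'$ that is a subdivision of $W$: the five edges of $W$ correspond to internally vertex-disjoint directed paths $P_{su}, P_{sv}, P_{ut}, P_{vt}, P_{uv}$ in $G'$. I would build a latency function $l$ on $G$ by placing, along each such path, the Wheatstone latency of the corresponding edge (concentrated on a single edge of the path, with $0$ on the others), and giving each edge of $G \setminus G'$ a large constant latency $M$. For $r = 1$ and $M$ large enough, any Wardrop equilibrium on $(G, r, l)$ concentrates the flow inside $G'$ and attains the Wheatstone equilibrium latency $2$. Since $G' \setminus P_{uv}$ is also a subgraph of $G$, the Wardrop equilibrium on this subgraph attains latency $3/2$, exhibiting Braess's paradox and so the vulnerability of $G$.

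For the \emph{only if} direction I would argue by contrapositive: if $G$ has no subgraph homeomorphic to $W$, then $G$ is not vulnerable. The key structural lemma is that avoiding a $W$-subdivision forbids any two internally vertex-disjoint $s$-$t$ paths from being connected by a further $s$-$t$ path that crosses between them at two distinct vertices; such a crossing would immediately expose a topological $W$. Leveraging this, I would show that the subgraph of $G$ carrying positive flow at any Wardrop equilibrium (which, under non-decreasing latencies, contains no flow-bearing cycles) admits a series-parallel decomposition in the sense of \cite{RS42}. A straightforward induction on this decomposition then shows that the equilibrium latency on $G$ equals the equilibrium latency on any subgraph still supporting the same flow, ruling out Braess's paradox.

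The main obstacle is exactly this only-if direction, and specifically the treatment of directed cycles, which is what separates the result from the undirected case of \cite{Milch06} and the irredundant-directed case of \cite{ChenEtal15}. The delicate point is to guarantee that cycles can be safely discarded---or identified with acyclic detours---without either creating new $W$-subdivisions or disturbing the equilibrium-latency comparison between $G$ and its subgraphs. I expect the technical heart of the proof to consist in verifying that the ``no-crossing'' structural lemma survives restriction to the acyclic flow-carrying subgraph and that the series-parallel reduction then applies uniformly to digraphs possibly containing cycles off the equilibrium support.
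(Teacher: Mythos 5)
Your \emph{if} direction is fine and matches the paper's: concentrate the Wheatstone latencies on one edge of each subdivision path and make every edge outside the copy of $W$ prohibitively expensive. The \emph{only if} direction, however, has a genuine gap in its concluding step. You propose to show that the flow-carrying subgraph at equilibrium is series-parallel and that ``the equilibrium latency on $G$ equals the equilibrium latency on any subgraph still supporting the same flow.'' That statement does not rule out Braess's paradox: vulnerability is witnessed precisely by subgraphs $H$ on which the original equilibrium flow is \emph{no longer feasible} (removing the paradoxical edge destroys the old equilibrium and forces a new, cheaper one). To show $G$ is not vulnerable you must prove $L(H,r,l|_H)\geq L(G,r,l)$ for \emph{every} subgraph $H$ and every $(r,l)$, including subgraphs disjoint from, or cutting through, the equilibrium support. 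Restricting attention to the support of one equilibrium and to subgraphs that still carry that same flow addresses exactly the cases where nothing can go wrong and skips the ones that matter.

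Two further points. First, the ``straightforward induction on the series-parallel decomposition'' is not straightforward: the implication ``series-parallel $\Rightarrow$ no Braess's paradox'' is the main technical content of \cite{Milch06} and \cite{ChenEtal15}, and your sketch gives no argument for it; the paper does not reprove it either, but explicitly delegates the acyclic case to Theorem~1 of \cite{ChenEtal15} (stated here as Theorem~\ref{thm:vuln-ser-par}) combined with \cite{Duf65}. Second, your treatment of cycles is where the argument would actually fail: even granting that an acyclic equilibrium flow exists, its support depends on $(r,l)$, different subgraphs $H$ have different supports, and your ``no-crossing'' lemma is not shown to control all of them simultaneously. The paper takes a different and complete route: Lemma~\ref{lemma:removeCycle} shows that in a $W$-free graph every simple cycle contains an edge lying on no acyclic $st$-path, so all cycles can be deleted yielding an acyclic $G'$ with ${\rm Path}_A(G')={\rm Path}_A(G)$; Lemmas~\ref{lemma:equivalentNashFlows} and~\ref{lemma:equivalentGraphs} show this preserves vulnerability for \emph{all} choices of $r$, $l$ and $H$; the acyclic characterization then applies to $G'$. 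You would need an analogue of that graph-level (rather than equilibrium-level) reduction for your argument to close.
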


One direction can be proved by assigning a big-enough latency to all edges
that do not belong to the homeomorphic copy of $W$ and by mimicking
the latencies in Fig. \ref{fig:wheatstone} for the remaining ones. 
For the converse, we exploit the fact that, for acyclic
graphs, vulnerability coincides with not being series-parallel; this is 
Theorem 1 from \cite{ChenEtal15}\footnote{
Indeed, acyclic graphs are a specific case of what they call
{\em irredundant} graphs. A graph is said to be irredundant is every edge and every node
belongs to a simple (i.e., acyclic) $st$-path.}
and yields the result, by using \cite{Duf65}
(where it is proved that, for undirected graphs, being series-parallel coincides with
not having a subgraph homeomorphic to $W$; this result scales only to {\em acyclic} directed graphs).
We then reduce the cyclic case to the acyclic one by showing that,
in every net that does not contain $W$, we can remove cycles without 
changing the set of acyclic paths and by proving that this operation does not affect vulnerability.
Thus, we cannot claim that every non-vulnerable graph is series-parallel
but only that it cannot contain a subgraph homeomorphic to $W$.
An example of a graph that is not vulnerable and not series-parallel will be given in the next section
(and, of course, it is cyclic).

To conclude, we should mention \cite{matroid15}, where
an elegant generalization to all congestion games is given and a characterization of 
structures that do not suffer of Braess's paradox is given in terms of matroids. 
Differently from our result, this elegant characterization is not directly related to graph theoretic concepts.

\vspace{-2mm}
\section{Comparing Weakness, Edge-Weakness, and Vulnerability}
\label{sec:compare}
\vspace{-1mm}

Stemming from the characterizations of weakness (Theorem~\ref{thm:weak-characterize}),
edge-weakness (Theorem \ref{thm:fragile}) and vulnerability (Theorem~\ref{thm:vuln}), we can relate
the three notions of inefficiency studied so far. The precise picture (for general directed graphs) is
given in the top-left part of Fig.~\ref{fig:incl}.

We first show that vulnerability implies edge-weakness and that it can be characterized
by containment of an acyclic weak subgraph.

\begin{theorem}
\label{thm:vuln-edgew}
If $G$ is vulnerable,  then it is edge-weak.
\end{theorem}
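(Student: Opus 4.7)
The plan is to lift the Wheatstone capacity assignment from $W$ to $G$, using the homeomorphism given by Theorem~\ref{thm:vuln}. More concretely, I first invoke Theorem~\ref{thm:vuln} to obtain a subgraph $H\subseteq G$ homeomorphic to $W$. Unfolding the homeomorphism, $H$ consists of two distinguished vertices $u',v'\in V(G)$ together with five internally vertex-disjoint directed paths
$P_1: s\leadsto u'$, $P_2: s\leadsto v'$, $P_3: u'\leadsto v'$, $P_4: u'\leadsto t$, $P_5: v'\leadsto t$, corresponding to the five edges of $W$.

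Next I define a capacity assignment on $E(G)$ that mimics the Wheatstone bottleneck on the elongated paths: set $c(e)=1$ for every $e\in E(P_1)\cup E(P_5)$, set $c(e)=2$ for every $e\in E(P_2)\cup E(P_3)\cup E(P_4)$, and set $c(e)=0$ for every $e\in E(G)\setminus E(H)$. Setting capacity~$0$ is legal (capacities must be nonnegative) and is the key trick that collapses the analysis of $G$ to that of $H$, since any flow must vanish on the non-$H$ edges and any $st$-path using a non-$H$ edge contains an automatically saturated edge. I then compute the maximum flow: routing one unit along $P_1\cdot P_4$ and one unit along $P_2\cdot P_5$ gives a feasible flow of value $2$; flow conservation at $u'$ and $v'$, combined with the bottlenecks $f(P_1)\le 1$ and $f(P_5)\le 1$, gives the matching upper bound $|f|\le 2$.

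Finally, I exhibit a saturating flow of value $1$: route one unit along the walk $P_1\cdot P_3\cdot P_5$, saturating one edge in $P_1$ and one edge in $P_5$ (any choice will do). To check that every $st$-path in $G$ contains a saturated edge, I split on whether the path uses an edge outside $E(H)$ (which has capacity~$0$ and is therefore saturated by the zero flow on it) or lies entirely in $H$, in which case it must be one of the three $st$-paths $P_1P_4$, $P_2P_5$, $P_1P_3P_5$; each of these shares a saturated bottleneck edge with the routed walk. Hence the flow is saturating, has value $1$, and is smaller than the maximum $2$, so by Definition~\ref{def:edgeweak} the graph $G$ is edge-weak.

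The main technical point I anticipate is the verification that the routing along $P_1P_3P_5$ saturates every $st$-path of $G$, not merely every $st$-path of $H$; the capacity-$0$ assignment on $E(G)\setminus E(H)$ is what makes this step essentially automatic and avoids any detour through Theorem~\ref{thm:fragile}, which would force a delicate choice of connected cut (complicated by the internal vertices of $P_3$, which in general can neither be reached from $s$ avoiding $\{u',v',t\}$ nor reach $t$ avoiding $\{s,u',v'\}$ within $H$).
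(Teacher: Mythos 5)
Your proof is correct and follows essentially the same route as the paper's: extract the homeomorphic copy of $W$ via Theorem~\ref{thm:vuln}, assign capacities that reproduce the Wheatstone bottleneck with $0$ everywhere outside the copy, and exhibit the unit flow along the zigzag $s\leadsto u'\leadsto v'\leadsto t$ as a saturating flow of value $1$ against a maximum flow of value $2$. The only difference is bookkeeping: the paper anchors the subdivision at branch vertices $s',t'$ and adds capacity-$2$ tails $s\leadsto s'$ and $t'\leadsto t$ (giving capacity $1$ to all five subdivision paths), whereas you assume the copy attaches directly to $s$ and $t$ and shift the capacity $2$ onto $P_2,P_3,P_4$ --- if the branch endpoints differ from $s$ and $t$, your $P_1$ and $P_2$ would share a prefix and you would need the paper's tail device, but this is a trivial adjustment.
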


\begin{theorem}
\label{thm:vuln-weak}
$G$ is vulnerable if and only if it contains a weak acyclic subgraph.
\end{theorem}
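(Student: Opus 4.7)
I would reduce both directions to Theorem~\ref{thm:vuln}, which already characterizes vulnerability as containment of a subgraph homeomorphic to the Wheatstone graph $W$.

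For the forward direction, suppose $G$ is vulnerable. By Theorem~\ref{thm:vuln}, $G$ contains a subgraph $H$ homeomorphic to $W$; concretely, $H$ is a subdivision of $W$, in which each of the five edges of $W$ is replaced by an internally vertex-disjoint directed path. Since $W$ is acyclic, every subdivision of $W$ is acyclic, so $H$ is a DAG. To show $H$ is weak, I would exhibit the mvs and critical walk required by Theorem~\ref{thm:weak-characterize}: the two ``middle'' vertices $u,v$ of $W$ survive in $H$ (they have total degree at least three) and still form a minimal $st$-separator of $H$, because removing both disconnects $s$ from $t$ while removing either alone leaves the $st$-path through the other. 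The edge $u \to v$ of $W$ subdivides to a directed walk $u \leadsto v$ in $H$ with both endpoints in the mvs $\{u,v\}$. By Theorem~\ref{thm:weak-characterize}, $H$ is weak; hence $H$ is the desired weak acyclic subgraph.

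For the backward direction, let $H$ be a weak acyclic subgraph of $G$. I would invoke the DAG-specific correspondence announced in the introduction and made precise by Theorem~\ref{thm:vulweak}: for acyclic graphs, weakness coincides with non-series-parallelness. Combining this with Duffin's theorem \cite{Duf65}, which states that a graph is series-parallel iff it contains no subgraph homeomorphic to $W$, the weak DAG $H$ must contain a subgraph homeomorphic to $W$. Since $H \subseteq G$, so does $G$; applying Theorem~\ref{thm:vuln} once more yields that $G$ is vulnerable.

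The main obstacle is the backward direction, which leans on the acyclic characterization of weakness through series-parallel graphs. A direct construction — starting from an mvs $T$ and an acyclic walk $a \leadsto b$ in $H$, then extracting simple $s \to a$, $s \to b$, $a \to b$, $a \to t$, $b \to t$ subpaths and contracting shared pieces into a clean Wheatstone subdivision — is possible but fiddly, because the five subpaths need not be internally disjoint and one must use the minimality of $T$ and the acyclicity of $H$ to unentangle them; that is why I prefer to route the argument through the series-parallel characterization already established elsewhere in the paper.
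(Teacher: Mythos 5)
Your forward direction is correct and is essentially the paper's own argument: take the homeomorphic copy of $W$ given by Theorem~\ref{thm:vuln}, observe it is an acyclic subgraph, and certify its weakness via the mvs $\{u,v\}$ and the walk $u\leadsto v$ using Theorem~\ref{thm:weak-characterize}.

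The backward direction has a genuine gap, and it is circular. First, the equivalence you invoke is misstated: for DAGs, weakness does \emph{not} coincide with not being TTSP. Theorem~\ref{thm:vulweak} equates ``not TTSP'' with ``\emph{contains a weak subgraph}'', and the inclusion of weak DAGs into non-TTSP DAGs is proper (graph $C$ of Fig.~\ref{fig:incl} is acyclic, not TTSP, yet not weak). Second, and more seriously, even the implication you actually need --- a weak DAG is not TTSP --- is not ``already established elsewhere in the paper'': in the proof of Theorem~\ref{thm:vulweak} the only route from ``contains a weak subgraph'' to ``not TTSP'' goes around the cycle through ``vulnerable'' and ``edge-weak'', and the step from ``contains a weak subgraph'' to ``vulnerable'' is justified there precisely by Theorem~\ref{thm:vuln-weak}, the statement you are proving. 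So routing through Theorem~\ref{thm:vulweak} assumes the conclusion. The ``fiddly'' direct construction you set aside is in fact the paper's proof and seems unavoidable: given the mvs $T$ with $\{a,b\}\subseteq T$ and the critical walk $a\leadsto b$ in the weak acyclic subgraph $H$, minimality of $T$ yields walks $s\leadsto a$ avoiding $b$, $s\leadsto b$ avoiding $a$, $a\leadsto t$ avoiding $b$, and $b\leadsto t$ avoiding $a$; these five walks assemble into a subgraph homeomorphic to $W$ (acyclicity of $H$ guarantees $a\neq b$ and that the branch vertices cannot lie on $a\leadsto b$), and Theorem~\ref{thm:vuln} then gives vulnerability of $G$. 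You need to carry out this construction (or some other argument independent of Theorems~\ref{thm:vuln-weak} and~\ref{thm:vulweak}) for the backward direction to stand.
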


It is easy to see that the graph $W$ 
is weak, vulnerable and edge-weak.
Let us now consider the other graphs in Fig.~\ref{fig:incl}. 
The graph $A$ is weak, because its mvs $\{u\}$ is critical. 
By contrast, it is not 
edge-weak (and so neither vulnerable): its only connected cuts are $(\{s\},\{u,v,t\})$ and $(\{s,u,v\},\{t\})$ and no walk passes through
their cut-sets twice.
The graph $B$ is weak (its mvs $\{u\}$ is critical) and edge-weak
(there is a walk passing twice through the cut-set of its connected cut $(\{s,u\},\{v,t\})$).
By contrast, it is not vulnerable, since it does not contain a subgraph homeomorphic to $W$.
The graph $C$ is not weak: 
its mvs's are $\{s\}$, $\{t\}$, $\{v_1, v_2\}$ and $\{v_3, v_4\}$ 
and they are not critical.
By contrast, this graph is vulnerable 
because it trivially contains a subgraph homeomorphic to $W$.
%
Finally, let us consider the graph $D$: it is not weak 
(its mvs's are $\{s\}$, $\{t\}$, $\{u\}$ and $\{v\}$, which are not critical) nor vulnerable
(it does not contain a subgraph homeomorphic to $W$). By contrast, it is edge-weak, because
there is a walk that passes twice through the cut-set of the connected cut $(\{s,u,x\},\{y,v,t\})$.

\begin{figure}[t]
\begin{center}
\begin{tabular}{c}
\begin{tikzpicture}
                        \draw [color=red] (1,1) ellipse (2 and 1);
                        \draw [color= blue] (3,1) ellipse (1 and .5);
                        \draw [color=olive] (3,1) ellipse (2 and 1);
\draw [color=red] (1,0) node[below left] {Weak};
\draw [color=olive] (3,0) node[below right] {Edge-Weak};
\draw [color=blue] (2.3,1.5) node[above right] {Vulnerable};
\draw (0,1) node {$A$};
\draw (1.5,1) node {$B$};
\draw (2.5,1) node {$W$};
\draw (3.5,1) node {$C$};
\draw (4.5,1) node {$D$};
\end{tikzpicture}
\qquad\qquad
\begin{tikzpicture}
                        \draw [color=red] (1,1) circle (.8);
                        \draw [color= blue] (1.5,1) ellipse (1.5 and 1);
                        \draw [color=olive] (1.5,1) ellipse (1.44 and 0.94);
\draw [color=red] (1.5,.8) node[below left] {Weak};
\draw [color=olive] (3,1.5) node[below right] {Edge-Weak};
\draw [color=blue] (2.6,1.5) node[above right] {Vulnerable};
\draw (4.3,1.5) node[above right] {=};
\draw (1.2,1.2) node {$W$};
\draw (2.2,1.2) node {$C$};
\end{tikzpicture}
\vspace*{-.5cm}
\\
\qquad
\\
$
\xymatrix@R=5pt@C=10pt{
&\\
s \ar[r] & u 
\ar@/^/[dd]
\ar[r] & t\\
& & \\
& v \ar@/^/[uu] & \\
& A
}
$
\qquad
$
\xymatrix@R=6pt@C=10pt{
&\\
s \ar[r] & u \ar@/^/[r]
& v \ar@/^/[l] \ar[r] & t\\
&\\
&\\
& B
}
$
\qquad
$
\xymatrix@R=7pt@C=12pt{
   & v_1 \ar[rdd] \ar[r] & v_3 \ar[rd]\\
s \ar[ur] \ar[dr] &&& t\\
   & v_2 \ar[r] \ar[uur] & v_4 \ar[ru]
\\
& C
}
$
\qquad
$
\xymatrix@R=8pt@C=10pt{
&\\
  s \ar[r] & u\ar[r]\ar[d] & v\ar[r] & t\\
               & x \ar@/_/[r] & y \ar@/_/[l]\ar[u]
\\
& D
}
\vspace*{-.2cm}
$
\end{tabular}
\caption{The inclusion diagram for cyclic (top-left) and acyclic (top-right) directed graphs
($W$ is the Wheatstone graph -- see Fig. \ref{fig:example}(b))}
\vspace*{-.5cm}
\label{fig:incl}
\end{center}
\end{figure}
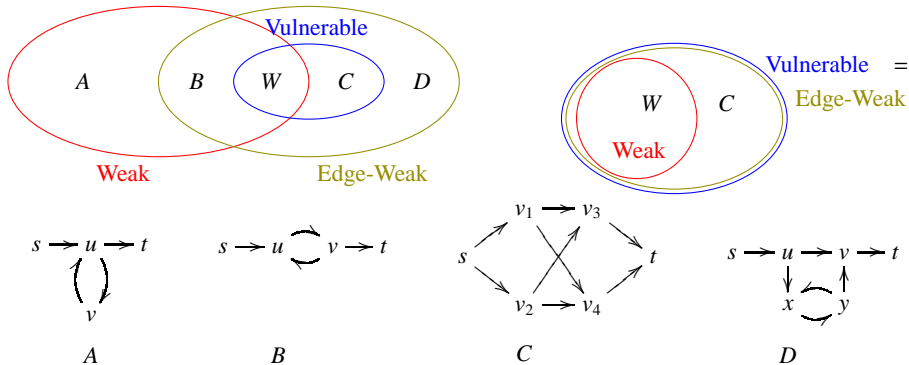


Graph $A$ also testifies that the characterization of vulnerability
given in \cite{ChenEtal15} does hold only for irredundant graphs. 
Indeed, Lemma 2 in  \cite{ChenEtal15} does not hold for $A$ (that is redundant because of the vertex $v$): 
the graph is cyclic but it does not contain what they call an $s$-$t$ paradox (see Def. 5 in  \cite{ChenEtal15}).

If we restrict ourselves to {\em acyclic} graphs (i.e., DAGs), the inclusion diagram changes;
it is depicted in the top-right part of Fig. \ref{fig:incl}. Indeed, for DAGs, we can prove that
both vulnerability and edge-weakness coincide with not being {\em two-terminal series-parallel} 
(TTSP \cite{RS42}); furthermore, the resulting class (properly) contains weak DAGs.

\begin{theorem}
\label{thm:vulweak}
If $G$ is a DAG, the following statements are equivalent:\\
\hspace*{.8cm} 
1. $G$ is vulnerable;
\hspace*{3cm}
3. $G$ is edge-weak;\\
\hspace*{.8cm} 
2. $G$ is not TTSP;
\hspace*{3.15cm} 
4. $G$ contains a weak subgraph.
\end{theorem}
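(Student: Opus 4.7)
The plan is to close the circle of equivalences by combining three already-available results with one new structural induction. Three implications follow almost directly from earlier theorems: (1)$\Rightarrow$(3) is Theorem~\ref{thm:vuln-edgew}; (1)$\Leftrightarrow$(4) is Theorem~\ref{thm:vuln-weak} specialised to DAGs, since every subgraph of a DAG is acyclic and so ``weak acyclic subgraph'' and ``weak subgraph'' coincide; and (1)$\Leftrightarrow$(2) follows by combining Theorem~\ref{thm:vuln}, which identifies vulnerability with containment of a subgraph homeomorphic to $W$, with the acyclic-directed version of Duffin's theorem~\cite{Duf65} already invoked in Section~\ref{sec:vuln}: a DAG is TTSP iff it contains no such subgraph.

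The real work is in (3)$\Rightarrow$(2): \emph{every TTSP DAG is not edge-weak}. I would prove this by structural induction on the series--parallel construction of $G$. The base case is a single edge, where the only saturating flow equals the capacity and is therefore maximum. For a parallel composition $G = G_1 \parallel G_2$, the sub-networks share only $s$ and $t$, so every $st$-path of $G$ lies entirely in one $G_i$; a saturating flow $f$ of $G$ therefore restricts to a saturating flow $f_i$ of each $G_i$, which by the inductive hypothesis achieves $\max(G_i)$, giving $|f| = |f_1| + |f_2| = \max(G_1) + \max(G_2) = \max(G)$.

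The main obstacle is the series case $G = G_1 \cdot G_2$, glued at a middle vertex $m$. Here $|f_1| = |f_2| = |f|$ by conservation at $m$, but a saturating flow in $G$ need not restrict to a saturating flow on \emph{both} sub-networks: an unsaturated $st$-path of $G$ decomposes as an $sm$-path in $G_1$ concatenated with an $mt$-path in $G_2$, and only one of the two halves is forced to carry a saturated edge. The argument must therefore be pushed through the series min-cut identity $\max(G) = \min(\max(G_1), \max(G_2))$: assuming $|f| < \max(G)$ and WLOG $\max(G_1) \leq \max(G_2)$, the restriction $f_1$ is not maximum, hence by the inductive hypothesis not saturating, so $G_1$ admits an unsaturated $sm$-path; concatenating it with an arbitrary $mt$-path of $G_2$ forces $f_2$ to saturate $G_2$, whence by induction $|f| = |f_2| = \max(G_2) \geq \max(G_1) > |f|$, a contradiction. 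This asymmetric use of the inductive hypothesis is the subtle point of the whole theorem.
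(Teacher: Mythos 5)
Your proposal is correct and follows essentially the same route as the paper: the peripheral implications are discharged by Theorems~\ref{thm:vuln-edgew}, \ref{thm:vuln-weak} and \ref{thm:vuln} (the paper packages them as the cycle vulnerable $\Rightarrow$ edge-weak $\Rightarrow$ not TTSP $\Rightarrow$ weak subgraph $\Rightarrow$ vulnerable, but the logical content is the same), and the real work is the structural induction showing that TTSP DAGs are not edge-weak. Your treatment of the series case is just a more explicit rendering of the paper's step ``$\hat f_i$ saturates all the paths of $G_i$, for either $i=1$ or $i=2$'' combined with $\max(G)=\min(\max(G_1),\max(G_2))$, so the two arguments coincide.
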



	\vspace{-2mm}
\section{Conclusion}
\label{sec:conclu}
	\vspace{-1mm}

We have studied different models for networks: one is provided with
depletable node charge modeling energy consumption in ad-hoc networks;
one is the standard flow network model;
and one is provided with latency functions on the edges in a game theoretic framework 
for modeling traffic networks. 
In all models, a graph-theoretical notion of inefficiency can be identified, 
always related to a counterintuitive phenomenon: 
some networks increase their performances when an edge is removed. 
We have compared these forms of inefficiency and we have found precise 
relationships.

We have also shown a polynomial time algorithm for checking weakness.
Our algorithm may seem overly complicated. Indeed, because of Theorem\,\ref{thm:bminimal},
it would be enough to compute, for every $b \in V$, a $b$-minimal mvs and check whether it is
critical or not. The problem is that not every vertex occurs in an mvs (see, for example, vertex $v$
of graph $A$ in Fig. \ref{fig:incl}). Moreover, for a vertex $b$ occurring in an mvs, we have not
found an efficient way to directly compute a $b$-minimal mvs; indeed, Alg.~\ref{alg:mmvs} computes
a $b$-minimal mvs starting from an mvs that already contains $b$. Efficiently finding such an mvs
is left for future development.

The characterization we gave for vulnerability gives also hints on its computational complexity: 
it is polynomial for general graphs (a $O(|V|^5)$ algorithm can be easily extracted from the proof
of Theorem \ref{thm:vuln}). 
Finally, edge-weakness is polynomial in the acyclic case (since it coincides with vulnerability)
but we still do not know in the general case; possibly, by following the ideas underlying
the algorithm for weakness, a polynomial-time algorithm for edge-weakness may be devised.

\medskip\noindent
{\bf Acknowledgements\ }
The authors wish to thank Irene Finocchi and Fabrizio Grandoni 
for helpful discussions about the topic of this paper.  

\bibliographystyle{abbrv}
\bibliography{threeForms}

\newpage
\appendix
\section{Proofs}

\subsection{Proofs of Section \ref{sec:weak}}

{\bf Proof of Theorem~\ref{thm:weak-characterize}}
\\
\noindent
{\it (If)} We define $\eta(v)=1$ for all $v\in T$ and we let $\eta(v)$ be sufficiently
large on all other vertices, so that $\ma_\eta=|T|$. We may assume without loss of
generality that the walk $r:a\leadsto b$ is acyclic and such that $r\,\cap T=\{a, b\}$.
Similarly, by the minimality of $T$, there exist (acyclic) directed walks $p:s\leadsto a$
and $q:b\leadsto t$ such that $p\,\cap T=\{a\}$ and $q\,\cap T=\{b\}$. Then, a flow $\phi$ of one
unit along the path $s\leadsto a\leadsto b\leadsto t$ is feasible and it leaves the channel
with a charge-to-node assignment $\theta$ such that $\ma_\theta\leq |T|-2$.
Then, we can combine $\phi$ with a maximum flow for $\theta$ to obtain a dead network after 
a flow of value $\ma_\theta+1<\ma_\eta$. Thus, $G$ is weak.

{\it (Only if)} Let $\eta$ be a channel inhibited by some flow $\phi$ 
of value $n<\ma_\eta$ and call $\zeta$ the resulting (dead) channel. There exists 
an mvs $T$ in the graph such that $\zeta(v)=0$, for all $v\in T$.
Since every flow cannot exceed the capacity of an mvs,\footnote{
	This comes from the min-cut-max-flow theorem \cite{amo93}, 
	that can be easily rephrased in our setting to sound as {\em min-mvs-max-flow}. Indeed,
	we can adopt the standard translation from vertex-capacitated nets into edge-capacitated nets:
	we replace every vertex $v$ different from $s$ and $t$ with two new vertices $v_i$ and $v_o$
	and we add a new edge $v_i \rightarrow v_o$ whose capacity is the charge of node $v$;
	every edge $u \rightarrow v$ is replaced with the edge $u_o \rightarrow v_i$ whose capacity is $\infty$;
	every edge $s \rightarrow v$ is replaced with the edge $s \rightarrow v_i$ whose capacity is $\infty$;
	every edge $u \rightarrow t$ is replaced with the edge $u_o \rightarrow t$ whose capacity is $\infty$.
	Now, there is a one-to-one correspondence between the mvs's of the vertex-capacitated model
	and the cuts of the corresponding edge-capacitated model whose cut-set is formed 
	only by edges  $v_i \rightarrow v_o$; moreover, the capacity of any mvs and of the corresponding cut
	coincide. Since only cuts arising from mvs's have a finite capacity, it is easy to show that
	a min-cut comes from a min-mvs and, conversely, that a min-mvs induces a min-cut.
	Moreover, since only edges of the form $v_i \rightarrow v_o$ put constraints 
	on a flow, a max-flow in the edge-capacitated model
	corresponds to a max-flow in the vertex-capacitated model, and vice versa.
} 
$\ma_\eta\leq\sum_{v:v\in T}\eta(v)$.
By definition $\zeta(v)=\eta(v)-\sum_{p:v\in p}\phi(p)$ and hence $\eta(v)=\sum_{p:v\in p}\phi(p)$.
Suppose no directed walk exists between any two vertices of $T$. Thus, since $T$ is an mvs, all paths
must include \emph{precisely} one vertex in $T$;
hence, $\sum_{v:v\in T}\sum_{p:v\in p}\phi(p)=\sum_{p}\phi(p)$. Summing up, we have the absurd:
\[
\vspace{-.6cm}
\ma_\eta\leq\sum_{v:v\in T}\eta(v) = \sum_{v:v\in T}\sum_{p:v\in p}\phi(p)=\sum_{p}\phi(p)=n.
\]
\qed

\subsection{Proofs of Section \ref{sec:complexity}}

\noindent
{\bf Proof of Theorem~\ref{thm:criticalNodes}\ }
Looking for a contradiction, let us suppose that in a graph there is 
a critical mvs $T$ with a return in $b\in T$, but there exists a complete chain 
$T_0, T_1, \ldots, T_n$ such that for all $i$, $T_i$ does not contain any critical node. 

Trivially, $b\not\sqsubseteq T_0$ and $b\sqsubseteq T_n$ (this happens for every $b \in V\setminus \{s\}$); thus,
let us consider the index $i$ such that $b\not\sqsubseteq T_i$ and $b\sqsubseteq T_{i+1}$. 
By construction, $T_{i+1}= T_i^{u_i}$, for some $u_i \in T_i$;
moreover, $b\not\in T_{i+1}$, because, by hypothesis, no critical node belongs to any mvs in the chain.

Let us define the many-steps predecessors of $b$ as $\fun{pred}^\ast(b) = \{v \in V : v \leadsto b\}$ and take $P=\fun{pred}^\ast(b)\cap T_i$, the set of many-steps predecessors 
of $b$ in $T_i$. For this set, we observe two things:
\begin{description}
\item[(i)] $\varnothing \neq P \preceq b$: since $T_i$ is an mvs, every path passing through $b$ must cross $T_i$ but,
since $b\not\sqsubseteq T_i$, it must be that $T_i \preceq b$. Trivially, since all nodes between $s$
and $b$ are by definition the many-step predecessors of $b$, the same relation holds by restricting $T_i$ to 
such nodes, thus obtaining $P$. Since $b \neq s$, this entails that $P \neq \varnothing$.

\item[(ii)] For all $p\in P$, $T^p_i = T_{i+1}$: 
If some $p\in P$ belonged to $T_{i+1}$, we could find a walk $p\leadsto b\leadsto u$ with $u\in T_{i+1}$
(indeed, $b\sqsubseteq T_{i+1}$); this would make $T_{i+1}$ a critical mvs. 
So, we must have that all nodes in $P$ disappear in $T_{i+1}$; by construction, this
happens because, for every $p \in P$, we have that $p \in {\cal I}_t(T_i\cup\fun{succ}(u_i))$
and hence $p \sqsubseteq T_{i+1}$. This fact, together with $p \not\in T_{i+1}$,
entails that $T^p_i \sqsubseteq T_{i+1}$, for all $p\in P$.
If $T^p_i\not=T_{i+1}$ for some $p\in P$, we contradict the hypothesis that $T_0, T_1, \ldots, T_n$ 
is a complete chain: indeed, we would have that $T_i\sqsubset T_p\sqsubset T_{i+1}$. 
\end{description}

Because of point $\bf (ii)$, the set $U$ of new nodes added from $T_i$ to $T_{i+1}$ is the same
whenever we refine $T_i$ with $u_i$ or any other $p \in P$; in particular, every node in $U$ is
an immediate successor of every node in $P$.
Moreover, we can claim the following fact about $U$:
\begin{description}
\item[(iii)] $b \sqsubseteq U$: since $b \sqsubseteq T_{i+1}$, every walk form $b$ to $t$ must cross
$T_{i+1}$. If it passes through a node $x \in T_i \setminus T_{i+1}$, then the walk $p \leadsto b \leadsto x$,
for any $p \in P$, would make $T_i$ critical.
\end{description}

We now use these facts to contradict the assumption that $b$ is a critical node.
Consider all the paths of the form $s\leadsto p\rightarrow u\leadsto t$,
with $p\in P$ and $u\in U$. Every mvs, to cut such paths, 
must contain either a set of nodes $P'\preceq P$
or a set of nodes $U' \sqsupseteq U$. 
In both cases, since $P' \preceq b$ and $b\sqsubseteq U'$ (because of $\bf (i)$ and $\bf (iii)$), 
$b$ cannot belong to any mvs (see Lemma~\ref{fact:one}) and hence cannot be a critical node.
\qed
\bigskip

\noindent
{\bf Proof of Theorem~\ref{thm:bminimal}\ }
Let $T$ be a $b$-critical mvs, where $a \in T$ is such that $a\leadsto b$.
By contradiction, assume the existence of an mvs $T^\ast$ that is both $b$-minimal and not $b$-critical. 
We shall now prove that this will entail that $a$ and $b$ cannot both belong together to the same mvs, 
thus contradicting the existence of $T$. 

Clearly, $a \not\in T^\ast$, because $T^\ast$ is not $b$-critical.
Moreover, 
the existence of a walk from the source $s$ to $a$ that passes through $T^\ast$ would 
imply the existence of a walk starting in $T^\ast$ that reaches $a$ and then $b$, 
by contradicting that $T^\ast$ is not $b$-critical.
Therefore, it must be $a \sqsubseteq T^\ast$.


Let $A=\{u\in T^\ast~|~ a\leadsto u\not=b\}=\{a_1,\ldots, a_n\}$. 
Notice that $A \neq \varnothing$, otherwise 
we would have $a\sqsubseteq b$ and therefore $a$ and $b$ could not belong to the 
same mvs (see Lemma~\ref{fact:one}).
For each element $a_i\in A$, let us now consider the mvs $T^\ast_{a_i}\sqsubseteq T^\ast$. 
By, $b$-minimality of $T^\ast$, $b$ does not belong to $T^\ast_{a_i}$. 
This implies that there exists a set of nodes $P_i\subseteq \fun{pred}(a_i)$ such that 
$P_i\preceq b$. Observe that there is some path from $s$ to $b$ that do not pass through $a$, 
otherwise $a\preceq b$ and then they could not belong to the same mvs.
Therefore, for each $i\in\{1,\ldots, n\}$, we can consider the set of nodes 
$B_i\subseteq P_i$ that are on a path from $s$ to $b$ that does not touch $a$.
Notice that, for all $i$, we have that $B_i \neq \varnothing$, otherwise all paths from 
$s$ to $b$ would pass through $a$.

Let us now fix an $i\in\{1,\ldots, n\}$, let us come back to $T$ and consider 
how it can cut all paths of the form $s\leadsto b_i\rightarrow a_i\leadsto t$, for all $b_i\in B_i$. 
 
If in $T$ there exists a set of nodes $L_i\preceq B_i$, this would imply that $b$ cannot be in $T$, because
$L_i \cup \{a\} \preceq B_i \cup \{a\} \preceq b$ and $L_i \cup \{a\} \subseteq T$ (see Lemma~\ref{fact:one}). 

So, it must be that at least one path $s \leadsto b_i \rightarrow a_i$ does 
not pass through $T$.
Therefore there must exist a set in $R_i\subseteq T$ such that $a_i \sqsubseteq R_i$.
Since this argument works for every $i$, we can consider $R = \bigcup_i R_i \subseteq T$.
Then, $A = 
\{a_1,\ldots, a_n\} \sqsubseteq \bigcup_i R_i = R$.
But again, since $a\sqsubseteq A\cup\{b\}\sqsubseteq R\cup\{b\}$ and $R\cup\{b\} \subseteq T$,
Lemma~\ref{fact:one} would imply that $a$ cannot belong to $T$.  
This is a contradiction with the initial choice of the mvs $T$ that contains both $a$ and $b$.  
\qed
\bigskip

\noindent
{\bf Proof of Theorem~\ref{thm:bminimalChar}\ }
Given an mvs $T$ and two nodes $u,b\in T$, let us consider the mvs $T_u \sqsubset T$. 
We have that $b \not\in T_u$ if and only if $b \in {\cal I}_s(T\cup\fun{pred}(u))$,
i.e. $(T\cup\fun{pred}(u)) \setminus \{b\} \prec b$. 
Since every predecessor (w.r.t. $\sqsubset$) of $T$ can be obtained as $T_u$, for
some $u \in T$, we easily conclude.
\qed

\subsection{Proofs of Section \ref{sec:edgeweak}}

Let us denote with $cutset(S,T)$ the cut-set of the cut $(S,T)$.

\begin{lemma}
\label{lemma:connectedcut}
Let $(S,T)$ be a connected cut. For every cut $(S',T')$, it holds that 
$cutset(S',T') \not\subset cutset(S,T)$.
\end{lemma}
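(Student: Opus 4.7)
\noindent
\textbf{Proof plan for Lemma \ref{lemma:connectedcut}.} The plan is to argue by contradiction: assume that there exists a cut $(S',T')$ with $cutset(S',T')\subsetneq cutset(S,T)$, and derive that in fact $cutset(S,T)\subseteq cutset(S',T')$, which contradicts the strictness. The core idea is that connectedness of $(S,T)$ forces every edge of its cut-set to be ``witnessed'' by a canonical $s$-to-$t$ walk that crosses from $S$ to $T$ exactly once, namely through that edge.

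\medskip

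More concretely, first I would pick an arbitrary edge $(u,v)\in cutset(S,T)$, so $u\in S$ and $v\in T$. By the definition of connected cut, there is a walk $\pi_1$ from $s$ to $u$ whose vertices all lie in $S$, and a walk $\pi_2$ from $v$ to $t$ whose vertices all lie in $T$. Concatenating, $\pi = \pi_1\cdot(u,v)\cdot\pi_2$ is a walk from $s$ to $t$. The key observation I would then record is that the only edge of $\pi$ belonging to $cutset(S,T)$ is $(u,v)$ itself: every edge of $\pi_1$ has both endpoints in $S$, every edge of $\pi_2$ has both endpoints in $T$, and so neither subwalk contributes any $S$-to-$T$ edge.

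\medskip

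Next I would use the cut $(S',T')$: since $\pi$ starts at $s\in S'$ and ends at $t\in T'$, it must contain at least one edge of $cutset(S',T')$. By the hypothesis $cutset(S',T')\subseteq cutset(S,T)$, any such edge is also in $cutset(S,T)$. Combined with the previous observation, the only candidate in $\pi$ is $(u,v)$, so $(u,v)\in cutset(S',T')$. Since $(u,v)\in cutset(S,T)$ was arbitrary, this yields $cutset(S,T)\subseteq cutset(S',T')$, contradicting the assumed strict containment. Hence no such $(S',T')$ exists, proving the lemma.

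\medskip

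The main obstacle is the middle step: verifying that the canonical walk $\pi$ really contains exactly one edge of $cutset(S,T)$. This is where the two-sided reachability condition built into ``connected cut'' is essential --- without it, the walks $\pi_1$ or $\pi_2$ could dip across the bipartition and spoil the uniqueness of the crossing. Once that invariant is secured, the rest is a short diagram-chase on where $\pi$ can cross $(S',T')$.
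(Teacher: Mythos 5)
Your proof is correct and rests on the same key ingredient as the paper's: the connectedness of $(S,T)$ yields walks confined to one block, and any crossing of $(S',T')$ by such a walk pins down which edges can lie in $cutset(S',T')$. The paper packages the contradiction slightly differently (it picks an edge in $cutset(S,T)\setminus cutset(S',T')$, case-splits on which side of $(S',T')$ its endpoints fall, and exhibits an edge of $cutset(S',T')$ outside $cutset(S,T)$), whereas you derive the reverse inclusion $cutset(S,T)\subseteq cutset(S',T')$ without a case split; both are valid and essentially equivalent.
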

\begin{proof}
By contradiction, assume a cut $(S',T')$ such that 
$cutset(S',T') \subset cutset(S,T)$ and let $(u,v) \in cutset(S,T) \setminus cutset(S',T')$.
Because $(S',T')$ is a cut and $(u,v) \not\in cutset(S',T')$, it can either be $\{u,v\} \subseteq S'$
or $\{u,v\} \subseteq T'$. 

In the first case, since $(S,T)$ is a connected cut, we know that there
exists a walk $v \leadsto t$ containing only vertices of $T$. But $v \in S'$ whereas $t \in T'$;
hence, there must exist a $(x,y) \in v \leadsto t$ such that $x \in S'$ and $y \in T'$.
Then, $(x,y) \in cutset(S',T')$, whereas $(x,y) \not\in cutset(S,T)$, because
$\{x,y\} \subseteq T$. This contradicts the assumption $cutset(S',T') \subset cutset(S,T)$.

In the second case, we work in a similar way, but consider the walk $s \leadsto u$ containing
only vertices of $S$.
\qed
\end{proof}

\noindent
{\bf Proof of Theorem~\ref{thm:fragile}}
Given a set of edges $X$, we write $c(X)$ to denote $\sum_{x\in X} c(x)$.

\medskip\noindent
{\it (If)} 
Let $(S,T)$ be the connected cut and $E'$ be its cut-set.
Let $p$ be a walk that passes through $E'$ at least twice, with $(u,v)$ and
$(x,y)$ be the first and the last edge of $E'$ touched by it. 
We define $c_e$ as the number of occurrences of $e$ in $p$, for all $e \in E'$, 
and we let $c_e$ be $c(E')+1$ on all other edges. 

First of all, notice that the maximum flow has value $c(E')$. This follows from the min-cut-max-flow
theorem \cite{amo93}, by noting that $(S,T)$ is a minimum cut. Indeed, by Lemma \ref{lemma:connectedcut},
no cut has a cut-set contained in $E'$; moreover, by definition of $c$, any cut whose cut-set contains
an edge not belonging to $E'$ has a capacity greater than $c(E')$.

By definition of cut-set,
$u \in S$ and $y \in T$; moreover, by construction, 
$s \leadsto u$ and $y\leadsto t$ do not pass through $E'$. Then, a flow $f$ of one
unit along $p$ is feasible and it leaves the net
with a residual capacity lower than $c(E')-1$.
Then, we can combine $f$ with a maximum flow for the residual net to obtain a saturating flow
of value at most $c(E')-1$. Thus, $G$ is edge-weak.

\medskip\noindent
{\it (Only If)} 
Let $c$ be a capacity function that admits a saturating flow $f$ 
of value smaller than the maximum and call $c'$ the resulting residual capacity. 
Since $f$ saturates $c$, we now show that there exists a connected cut whose cut-set has
a residual capacity that equals 0. If this was not the case, let us reason as follow.
Start with the cut $(\{s\}, V \setminus \{s\})$. Clearly, this is a connected cut; so, there exists
an edge $e_1$ from $s$ to some $u_1 \in V \setminus \{s\}$ such that $c'_{e_1} > 0$.
If $u_1 = t$, we have a contradiction with the fact that $f$ saturates $c$.
So, consider the cut $(\{s,u_1\}, V \setminus \{s,u_1\})$. Again, this is a connected cut and
so there exists an edge $e_2$ from $\{s,u_1\}$ to some $u_2 \in V \setminus \{s,u_1\}$ 
such that $c'_{e_2} > 0$. If $u_2 = t$, we have a contradiction with the fact that $f$ saturates $c$.
Otherwise, we go on, until we find a $u_k = t$. Then, the path $s \rightarrow u_1 \rightarrow u_2
\rightarrow \ldots \rightarrow u_k = t$ contradicts the fact that $f$ saturates $c$.

To conclude, let $(S,T)$ be the connected cut whose cut-set has been saturated by $f$.
By the min-cut-max-flow theorem \cite{amo93}, $c(E') \geq |f_{\rm max}|$. But this is possible
only if there is a path that passes through $E'$ at least twice. Indeed, if this was
not the case, we would have that $c(E') = |f|$, in contradiction with the assumption
$|f| < |f_{\rm \max}|$.
\qed

\subsection{Proofs of Section \ref{sec:vuln}}

First, we characterize vulnerability for acyclic graphs; this is an immediate corollary 
of the main result from \cite{ChenEtal15}: an irredundant graph is vulnerable
if and only if it is not series-parallel. 
Since acyclic graphs are a special case of irredundant ones, the theorem
needs no proof.

\begin{theorem}
\label{thm:vuln-ser-par}
Let $G$ be acyclic; then, $G$ is vulnerable if and only if it is not TTSP.
\end{theorem}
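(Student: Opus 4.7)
The plan is to derive Theorem~\ref{thm:vuln-ser-par} as a direct corollary of Theorem~1 of \cite{ChenEtal15}, which asserts that an \emph{irredundant} directed graph is vulnerable if and only if it is not TTSP. Recall from the footnote in Section~\ref{sec:vuln} that a graph is irredundant when every vertex and every edge lies on some simple (acyclic) $st$-path. Thus the only thing I need to verify is that every directed acyclic $st$-graph considered in the paper is automatically irredundant; once this is established, Theorem~\ref{thm:vuln-ser-par} is an immediate instance of the \cite{ChenEtal15} result.

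The vertex condition is essentially given: as stated at the beginning of Section~\ref{sec:weak}, we assume that every vertex belongs to at least one $st$-path, and in an acyclic graph every such path is automatically simple. The edge case is the step that requires a short argument. Given an edge $(u,v)$, pick a simple walk $\pi_1 : s \leadsto u$ (extracted from an $st$-path through $u$) and a simple walk $\pi_2 : v \leadsto t$ (extracted from an $st$-path through $v$); both exist and are simple because $G$ is acyclic. I would then observe that $\pi_1$ and $\pi_2$ cannot share any vertex $w$, for otherwise the concatenation $w \leadsto u \to v \leadsto w$ would be a directed cycle through $w$, contradicting acyclicity. Hence $\pi_1 \cdot (u,v) \cdot \pi_2$ is a simple $st$-path through $(u,v)$, proving irredundancy.

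With irredundancy in hand, both directions of the biconditional follow at once from \cite[Theorem~1]{ChenEtal15}. I do not anticipate a genuine obstacle here: the only thing one has to be careful about is the slight subtlety in verifying that \emph{edges} (not only vertices) lie on simple $st$-paths, since the standing assumption in the paper is only phrased for vertices. The cycle-creation argument above handles this in one line, so the proof is indeed just a brief remark rather than a substantial derivation.
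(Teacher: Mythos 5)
Your proposal is correct and follows essentially the same route as the paper: both obtain the statement as an immediate instance of Theorem~1 of \cite{ChenEtal15} by observing that acyclic $st$-graphs are irredundant. The only difference is that you explicitly verify irredundancy (in particular the edge condition, via the vertex-disjointness of $s\leadsto u$ and $v\leadsto t$), whereas the paper simply asserts that acyclic graphs are a special case of irredundant ones and declares that the theorem needs no proof.
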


We are now ready to move to general (i.e., cyclic) graphs.
To this aim, let $G$ be a directed $st$-graph; 
we denote with ${\rm Path}_A(G)$ the set of acyclic paths from $s$ to $t$ in $G$. 

\begin{lemma}
\label{lemma:removeCycle}
Let $G$ be a directed $st$-graph that does not contain a subgraph homeomorphic to the Wheatstone network. 
Let $C={v_1,\ldots, v_n}$ be the nodes of a simple cycle in $G$. 
Then, we can remove an edge in $C$, obtaining a graph $G'$ such that 
${\rm Path}_A(G')={\rm Path}_A(G)$.
\end{lemma}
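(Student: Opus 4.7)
The plan is to argue by contrapositive: assuming every edge of the cycle $C$ lies on some acyclic $st$-path of $G$, I will construct a subgraph of $G$ homeomorphic to the Wheatstone network $W$, contradicting the hypothesis on $G$. Equivalently, if some $C$-edge is used by no acyclic $st$-path, removing it produces the required $G'$.

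First I classify the vertices of $C$: call $v \in C$ a \emph{start} if some walk $s \leadsto v$ meets $C$ only at $v$, and an \emph{end} if some walk $v \leadsto t$ meets $C$ only at $v$. Along any acyclic $st$-path that uses a $C$-edge, the first $C$-vertex is a start and the last is an end, and acyclicity forces them to be distinct. A short counting argument then shows $C$ must carry at least two starts: if $v_a$ were the unique start, the acyclic path through $(v_{a-1}, v_a)$ would need $v_{a-1} \in C$ to appear strictly before $v_a$, contradicting that $v_a$ is the first $C$-vertex on that path. Symmetrically, $C$ carries at least two ends.

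Next I aim to produce the $W$-homeomorphism by finding a $C$-edge $(v_i, v_{i+1})$ that can serve as the crossover edge of $W$, together with two internally disjoint paths from $s$ to $\{v_i, v_{i+1}\}$ and two internally disjoint paths from $\{v_i, v_{i+1}\}$ to $t$. To that end I pick distinct starts $v_a, v_c$ and distinct ends $v_b, v_d$ on $C$ and consider their cyclic arrangement. In the interleaved order $v_a, v_b, v_c, v_d$, I read off a Wheatstone by setting $u = v_b$, $v = v_c$: the five path subdivisions are the external walk $s \leadsto v_a$ followed by the forward $C$-arc $v_a \to \cdots \to v_b$, the external walk $s \leadsto v_c$, the $C$-arc $v_b \to \cdots \to v_c$, the external walk $v_b \leadsto t$, and the $C$-arc $v_c \to \cdots \to v_d$ followed by $v_d \leadsto t$. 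In a non-interleaved cyclic arrangement, a cycle edge sitting at the boundary between the block of starts and the block of ends plays the role of the crossover $u \to v$ directly, with the two $s$-side and two $t$-side paths assembled from external walks and forward $C$-arcs that avoid this edge.

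The main obstacle is ensuring the four external walks are internally vertex-disjoint off $C$: a priori they may share vertices outside the cycle. I would handle this by choosing each external walk of minimum length and applying a standard uncrossing argument: whenever two external walks cross at an off-$C$ vertex $x$, swap their tails at $x$; since each swap strictly decreases total length, the process terminates in an internally disjoint family. A small amount of case analysis dispatches the remaining corner cases (a vertex that is simultaneously start and end, or degenerate orderings with fewer than four distinct distinguished vertices), each of which either reduces to the interleaved setting by re-picking the four vertices or directly exhibits a Wheatstone with the crossover provided by a $C$-edge, closing the contradiction.
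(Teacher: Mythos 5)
Your high-level plan is the right one and essentially matches the paper's: locate ``entry'' and ``exit'' vertices on $C$ (your starts and ends; the paper's $v_1,v_j$ and $v_i,v_k$) and show that either they interleave around the cycle, yielding a subdivision of $W$, or else some edge of $C$ lies on no acyclic $st$-path and can be deleted. Your counting step --- an acyclic path through a $C$-edge enters at a start and leaves at a distinct end, and a unique start (or end) already kills the edge entering (leaving) it --- is correct. But there are two concrete gaps. First, the non-interleaved case is mishandled. When the starts form one block and the ends another (say cyclic order $v_a, v_c, v_b, v_d$ with $v_a,v_c$ starts and $v_b,v_d$ ends), no boundary edge can serve as the crossover of $W$: one of the two required $s$-side branches must reach its branch vertex by travelling along $C$ through the other branch vertex (the only guaranteed route $s\leadsto v_b$ is $s\leadsto v_a\to\cdots\to v_b$, which passes through $v_c$). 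In this configuration the correct conclusion is not that a Wheatstone exists, but that a specific edge --- the one leaving the last exit in the direction of the start block --- lies on no acyclic path; this is precisely what the paper proves in its second and third sub-cases, and it is the half of the dichotomy your sketch replaces with a construction that cannot be carried out.

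Second, your uncrossing argument for making the four external walks internally disjoint is broken: swapping the tails of two walks at a common off-$C$ vertex $x$ leaves the total length unchanged (the tails are merely exchanged) and both resulting walks still pass through $x$, so your termination measure does not decrease and the crossing is not removed. The standard repair for the two walks out of $s$ is a fork argument (restart both at the last vertex of one that lies on the other), and dually for the two walks into $t$; even then one must still argue that the $s$-side branches avoid the $t$-side branches and the cycle arcs, which your sketch does not address. (In fairness, the paper's own proof also asserts the homeomorphic copy of $W$ in its final sub-case without spelling out this disjointness.) As it stands, the proposal establishes the easy counting facts but leaves both the decisive case split and the disjointness of the witnessing subdivision unproven.
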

\begin{proof}
Let $(v_1,v_2),(v_2,v_3),\ldots,(v_{n-1},v_n),(v_n,v_1)$ be a simple cycle in $G$. 
Without loss of generality, let $v_1$ be a node such that there exists 
a walk $s\leadsto v_1$ that touches $C$ only in $v_1$. 
Such a node does exist, 
otherwise the cycle would not be reachable from the source $s$. 
Similarly, let $v_k$ be the last node in the sequence 
$v_{1}, \ldots, v_{n}$ such that there exists a walk 
$v_k\leadsto t$ that touches $C$ only in $v_k$. 
Such a node does exist, otherwise the sink $t$ would not be 
reachable from the cycle $C$. 

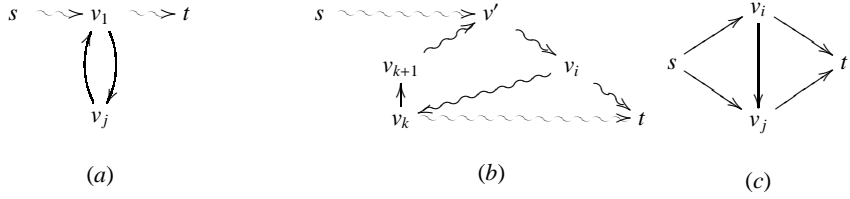
\begin{figure}[t]
\begin{tabular}{ccc}
\begin{minipage}{0.3\linewidth}
$
\xymatrix@R=10pt@C=20pt{
s \ar@{~>}[r] & v_1 
\ar@/^/[dd]
\ar@{~>}[r] & t\\
& & \\
& v_j \ar@/^/[uu] & \\
& (a)
\\
&
}
$
\end{minipage}
~
&
~
\begin{minipage}{0.35\linewidth}
$
\xymatrix@R=8pt@C=16pt{
s \ar@{~>}[rr]  & & v' \ar@{~>}[dr] & & \\
& v_{k+1} \ar@{~>}[ur]& &v_i\ar@{~>}[dll]\ar@{~>}[dr] & \\
& v_k \ar@{~>}[rrr]\ar@{->}[u] & & & t\\
&& (b)
}
$
\caption{Figures for Lemma~\ref{lemma:removeCycle}}
\label{fig:lemmacycle}
\end{minipage}
~
&
~
\begin{minipage}{0.3\linewidth}
$
\xymatrix@R=10pt@C=20pt{
   & v_i \ar[dd]\ar[dr]\\
s \ar[ur]\ar[dr] && t\\
   & v_j \ar[ur]\\
   & (c)
   \\
   &
   }
$
\end{minipage}
\end{tabular}
\end{figure}

If $k=1$, no acyclic path in $G$ passes through an edge in $C$ (see Fig.~\ref{fig:lemmacycle}(a)). 
In such a case, we can remove all edges in $C$, without removing any acyclic path in $G$.

Let us consider now a walk $v_1\leadsto t$. If all such walks have the form 
$v_1\leadsto v_k\leadsto t$, then 
$v_k$ is the only exit from $C$. Consequently, the edge $(v_k,v_{k+1\, {\rm mod} ~ n})$ 
does not belong to any acyclic path in $G$. Indeed all paths of the form 
$s \leadsto v_k \rightarrow v_{k+1\, {\rm mod} ~ n} \leadsto t$ must return in 
$v_k$ after $v_{k+1\, {\rm mod} ~ n}$ to leave the cycle $C$.

Otherwise, let $v_i$ ($1 \leq i < k$) be the first node in $v_{1}, \ldots, v_{k-1}$ 
that can reach $t$ without passing through other nodes in $C$
and consider nodes $v_{i+1}, \ldots, v_k$. If all walks from $s$ to them
pass through other nodes in $C$, again the 
edge $(v_k,v_{k+1\, {\rm mod} ~ n})$ 
does not belong to any acyclic path in $G$. 
Indeed, consider a path that uses edge $(v_k,v_{k+1\, {\rm mod} ~ n})$.
For what we have just assumed, such a path enters into the cycle $C$
in a node $v' \in \{v_{k+1\, {\rm mod}~ n}, \ldots, v_1, \ldots v_{i}\}$ and has to pass through edge 
$(v_k,v_{k+1\, {\rm mod}~ n})$. Since all vertices in $v_{k+1\, {\rm mod}~ n}, \ldots, v_1, \ldots v_{i}$
can reach $t$ only through $v_i$ or after it, the path is cyclic (see Fig.~\ref{fig:lemmacycle}(b)).

Otherwise, let $v_j$ ($i<j\leq k$) be such that there exists a walk $s\leadsto v_j$ that does not pass 
through other nodes in $C$. In this case, $G$ would contain a subgraph homeomorphic to the 
Wheatstone network, as given in Fig.~\ref{fig:lemmacycle}(c).
\qed
\end{proof}

\begin{lemma}
\label{lemma:equivalentNashFlows}
Let $G'\subseteq G$. Let $\varphi$ be an (acyclic) flow at the Wardrop equilibrium for $(G,r,l)$. 
If ${\rm Path}_A(G')={\rm Path}_A(G)$ then $\varphi$ is a flow at the Wardrop equilibrium 
for $(G',r,l)$.
\end{lemma}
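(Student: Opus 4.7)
The plan is to verify the two properties defining a Wardrop equilibrium for $(G',r,l)$ (feasibility and the equal-latency inequality), by transferring them from the hypothesis that $\varphi$ is an acyclic Wardrop flow for $(G,r,l)$. Since $G' \subseteq G$, we have $P(G') \subseteq P(G)$, so $\varphi$ can be regarded as a function on $P(G')$ by restriction. The key observation I would use throughout is that, because $\varphi$ is acyclic, its support lies entirely in $\mathrm{Path}_A(G)$, and the hypothesis $\mathrm{Path}_A(G')=\mathrm{Path}_A(G)$ guarantees every such path is already available in $G'$.

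First I would check feasibility: the value of $\varphi$ viewed on $P(G')$ equals $\sum_{p \in P(G')} \varphi(p) = \sum_{p \in \mathrm{Path}_A(G')} \varphi(p)$, since cyclic contributions vanish. By the hypothesis on acyclic paths, this equals $\sum_{p \in \mathrm{Path}_A(G)} \varphi(p) = \sum_{p \in P(G)} \varphi(p) = r$. Next I would show that the induced edge flow is unchanged: for every edge $e$ of $G'$,
\[
\varphi|_{P(G')}(e)=\sum_{p\in P(G'),\, e\in p}\varphi(p)=\sum_{p\in \mathrm{Path}_A(G'),\, e\in p}\varphi(p)=\sum_{p\in \mathrm{Path}_A(G),\, e\in p}\varphi(p)=\varphi(e),
\]
again using acyclicity of the support together with $\mathrm{Path}_A(G')=\mathrm{Path}_A(G)$. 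Consequently $l_e(\varphi|_{P(G')}(e))=l_e(\varphi(e))$ for every edge $e$ of $G'$, and hence $l_p(\varphi|_{P(G')})=l_p(\varphi)$ for every $p \in P(G')$, because such a path uses only edges of $G'$.

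It then remains to check the Wardrop inequality. Given $p,q\in P(G')$ with $\varphi|_{P(G')}(p)>0$, both paths also lie in $P(G)$ and $\varphi(p)>0$, so the equilibrium condition for $(G,r,l)$ yields $l_p(\varphi)\le l_q(\varphi)$; by the latency identity just established, this transports to $l_p(\varphi|_{P(G')})\le l_q(\varphi|_{P(G')})$, as required. No step presents a real obstacle: the only subtlety worth stating carefully is that $P(G')$ may still contain cyclic paths, but their contribution is killed by the acyclicity of $\varphi$, which is precisely why the hypothesis is phrased in terms of $\mathrm{Path}_A(\cdot)$ rather than $P(\cdot)$.
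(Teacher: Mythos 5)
Your proof is correct and takes essentially the same route as the paper's: both transfer the Wardrop inequality from $(G,r,l)$ to $(G',r,l)$ using the fact that the acyclic support of $\varphi$ lies in ${\rm Path}_A(G)={\rm Path}_A(G')\subseteq P(G')\subseteq P(G)$. You are somewhat more careful than the paper, which silently identifies the two induced edge flows and does not check feasibility, whereas you verify both explicitly; this is a welcome but not substantively different elaboration.
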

\begin{proof}
By Proposition 2.2 in \cite{Rough06}, we know that, for every $p,q \in P(G)$, if $\varphi_p > 0$
then $\ell_p(\varphi) \leq \ell_q(\varphi)$. By hypothesis,  $\varphi$ assigns positive flow only to
acyclic paths in $G$; thus, $p \in {\rm Path}_A(G) = {\rm Path}_A(G')$. Moreover, since 
$G'\subseteq G$, it holds that $P(G') \subseteq P(G)$. Thus, trivially, for every $p,q \in P(G')$, 
if $\varphi_p > 0$ then $\ell_p(\varphi) \leq \ell_q(\varphi)$. Again by Proposition 2.2 in \cite{Rough06}, 
this means that $\varphi$ is a flow at the Wardrop equilibrium for $(G',r,l)$.
\end{proof}

\begin{lemma}
\label{lemma:equivalentGraphs}
Let $G'\subseteq G$.
If ${\rm Path}_A(G')={\rm Path}_A(G)$ then $G$ is vulnerable if and only if $G'$ is vulnerable. 
\end{lemma}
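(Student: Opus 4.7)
The plan is to transport a vulnerability certificate back and forth between $G$ and $G'$ by using Lemma~\ref{lemma:equivalentNashFlows} to equate Wardrop latencies on both the ``big'' graph and its chosen subgraph. The hypothesis ${\rm Path}_A(G)={\rm Path}_A(G')$ makes the two instances indistinguishable from the viewpoint of acyclic Wardrop flows, which is what drives both directions; throughout I rely on the fact (standard for non-decreasing, continuous, non-negative latencies) that every instance admits an \emph{acyclic} Wardrop flow, obtained from any Wardrop flow by iteratively shortcutting cycles.

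\emph{Forward direction.} Assume $r,l$ and $H\subseteq G$ witness $L(G,r,l)>L(H,r,l|_H)$, and set $H':=H\cap G'\subseteq G'$. The routine check is ${\rm Path}_A(H)={\rm Path}_A(H')$: the inclusion $\supseteq$ is immediate from $H'\subseteq H$, and for $\subseteq$ an acyclic path $p$ in $H$ is acyclic in $G$, hence belongs to ${\rm Path}_A(G')$, so all edges of $p$ lie in $G'$ and therefore in $H\cap G'=H'$. Applying Lemma~\ref{lemma:equivalentNashFlows} to $G'\subseteq G$ and to $H'\subseteq H$, together with the uniqueness of Wardrop latencies, gives $L(G,r,l)=L(G',r,l|_{G'})$ and $L(H,r,l|_H)=L(H',r,l|_{H'})$. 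Chaining these with the assumed strict inequality certifies that $G'$ is vulnerable.

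\emph{Backward direction.} Assume $r,l'$ and $H'\subseteq G'$ witness $L(G',r,l')>L(H',r,l'|_{H'})$. I would extend $l'$ to a continuous non-decreasing latency $l$ on $G$ (choosing any such extension on the edges of $G\setminus G'$) and take $H:=H'\subseteq G$; then $l|_H=l'|_{H'}$ gives $L(H,r,l|_H)=L(H',r,l'|_{H'})$ for free. The substantive step is a converse to Lemma~\ref{lemma:equivalentNashFlows}: an acyclic Wardrop flow $\varphi$ for $(G',r,l')$ is still Wardrop in $(G,r,l)$. To verify this, observe that every $p\in P(G)$ with $\varphi(p)>0$ lies in ${\rm Path}_A(G')$ and so has the same latency under $l$ and $l'$; for an arbitrary $q\in P(G)$, extract an acyclic $st$-sub-walk $q_A$ of $q$ by iteratively shortcutting repeated vertices, so the edge multiset of $q_A$ is contained in that of $q$. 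Then $q_A\in{\rm Path}_A(G')$, and non-negativity of latencies yields $l_p(\varphi)\leq l_{q_A}(\varphi)\leq l_q(\varphi)$, which is exactly the Wardrop condition for $(G,r,l)$. Uniqueness of Wardrop latencies now gives $L(G,r,l)=L(G',r,l')$, and vulnerability of $G$ follows.

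\emph{Main obstacle.} The delicate point is the cycle-cutting step in the backward direction: one must justify that the acyclic sub-walk $q_A$ of any $q\in P(G)$ can be chosen so that its edges form a sub-multiset of those of $q$, whence non-negativity of latencies alone yields $l_{q_A}(\varphi)\leq l_q(\varphi)$. This is a standard walk-shortcutting argument, but it is the one spot where the proof uses more than bookkeeping with Lemma~\ref{lemma:equivalentNashFlows} and the hypothesis ${\rm Path}_A(G)={\rm Path}_A(G')$.
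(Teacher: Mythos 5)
Your proof is correct, but it takes a genuinely different route from the paper's on the substantive direction. For ``$G$ vulnerable $\Rightarrow$ $G'$ vulnerable'' with witness $H\subseteq G$, the paper distinguishes whether $H\subseteq G'$ or not; in the latter case it locates an edge $(u,v)\in H\setminus G'$, constructs a simple cycle through it whose vertices are all dominated by a single vertex $w$ (on both the source and the sink side), deletes that cycle, and iterates until the residual graph sits inside $G'$, invoking Lemma~\ref{lemma:equivalentNashFlows} at each stage. You bypass this entire cycle-surgery by setting $H':=H\cap G'$ and observing that ${\rm Path}_A(H')={\rm Path}_A(H)$ follows in one line from the hypothesis ${\rm Path}_A(G)={\rm Path}_A(G')$ (every acyclic path of $H$ is an acyclic path of $G$, hence of $G'$, hence lies in the intersection); two applications of Lemma~\ref{lemma:equivalentNashFlows}, to the pairs $(G',G)$ and $(H',H)$, then transport the latency gap. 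This is shorter, avoids the termination argument, and does not need the structural claims about dominating cycles. Conversely, on the direction the paper dismisses as trivial, you do more work than necessary but the work is sound: rather than assigning huge latencies to the edges of $G\setminus G'$ (the usual one-line trick), you extend $l'$ arbitrarily and prove a converse of Lemma~\ref{lemma:equivalentNashFlows}, using the standard fact that an acyclic $st$-sub-walk $q_A$ of any $q\in P(G)$ has edge multiset contained in that of $q$, so non-negativity of latencies gives $l_{q_A}(\varphi)\leq l_q(\varphi)$; this actually establishes the slightly stronger statement that \emph{any} monotone continuous extension of the latencies preserves the equilibrium latency. Both arguments are valid; yours is the more economical proof of the lemma as stated, while the paper's cycle-removal style matches the machinery it develops anyway for Theorem~\ref{thm:vuln} (Lemma~\ref{lemma:removeCycle}).
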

\begin{proof}
Vulnerability of $G'$ trivially entails vulnerability of $G$. Let us prove the opposite implication.
Let $H\subset G$ be such that $L(H,r,l)<L(G,r,l)$, for some $r$ and $l$. 

If $H \subset G'$, let $\varphi$ be an acyclic flow for $G$ at the Wardrop equilibrium
(one always exists by Proposition 2.4 of \cite{Rough06}). By Lemma~\ref{lemma:equivalentNashFlows},
$\varphi$ is a flow for $G'$ at the Wardrop equilibrium; thus, $L(H,r,l)<L(G',r,l)$, i.e. $G'$ is vulnerable.

Otherwise, it cannot be $H = G'$, because ${\rm Path}_A(H) \subset {\rm Path}_A(G)$; indeed,
because of Lemma~\ref{lemma:equivalentNashFlows}, if ${\rm Path}_A(H) = {\rm Path}_A(G)$,
we would have $L(H,r,l)=L(G,r,l)$. Thus, there is an edge $(u,v) \in H$ such that $(u,v) \not\in G'$;
this means that $(u,v)$ only belongs to cyclic paths of $G$, because by hypothesis
${\rm Path}_A(G')={\rm Path}_A(G)$. We now show that this implies the existence of a simple cycle
$p:w \leadsto w$ in $G$ containing $(u,v)$ such that: 
\begin{enumerate}
\item for every $w' \in w \leadsto u$, it holds that $w \sqsubseteq w'$, and
\item for every $w' \in v \leadsto w$, it holds that $w' \preceq w$. 
\end{enumerate}
First of all, since $(u,v)$ belongs to a cyclic path in $G$, it also belongs to a simple cycle in $G$, say $w \leadsto w$. 
To prove the first claim, consider $w' \neq w$ (the case for $w' = w$ is trivial). Then, if there was a path $s \leadsto w'$
not passing through $w$, then $s \leadsto w' \leadsto u \rightarrow v \leadsto w \leadsto t$ would be an acyclic path
in $G$, again by the assumption that $(u,v)$ only belongs to cyclic paths in $G$.
Similarly, to prove the second claim, assume a walk $w' \leadsto t$ not passing through $w$; 
then $s \leadsto w \leadsto u \rightarrow v \leadsto w' \leadsto t$ would be an acyclic path
in $G$, again the assumption that $(u,v)$ only belongs to cyclic paths in $G$.

Then, consider $H' = H \setminus p$. Trivially, ${\rm Path}_A(H')={\rm Path}_A(H)$ and
let $\varphi'$ be an acyclic flow for $H$ at the Wardrop equilibrium
(one always exists by Proposition 2.4 of \cite{Rough06}). By Lemma~\ref{lemma:equivalentNashFlows},
$\varphi'$ is a flow for $H'$ at the Wardrop equilibrium and $L(H,r,l)=L(H',r,l)$.
If $H' \subset G'$, then $L(H',r,l)<L(G',r,l)$ and $G'$ is vulnerable.
Otherwise, we can find another simple cycle to be removed from $H'$ but this
procedure has to terminate, eventually yielding that $G'$ is vulnerable, as desired.
\qed
\end{proof}

\noindent
{\bf Proof of Theorem \ref{thm:vuln}}
For the "if" part,
we know that $G$ admits a subgraph of the form
$$
\xymatrix@R=10pt@C=20pt{
   && u \ar@{~>}[dd]\ar@{~>}[dr]\\
& s' \ar@{~>}[ur]\ar@{~>}[dr] && t'\\
   && v \ar@{~>}[ur]
   }
$$
Let us consider the latency assignment $l$ that assigns:
\begin{itemize}
\item 0 to all edges in $u \leadsto v$;
\item $x$ to the first edge in $s' \leadsto u$ and in $v \leadsto t'$ and 0 to all the remaining edges in those paths;
\item 1 to the first edge in $s' \leadsto v$ and in $u \leadsto t'$ and 0 to all the remaining edges in those paths;
\item $\infty$ to all the remaining edges. 
\end{itemize}
Trivially, this reproduces the Wheatstone network within $G$ that, consequently, is vulnerable.

For the "only if" part, 
if $G$ is not cyclic, the statement follows by Theorem~\ref{thm:vuln-ser-par}.
If $G$ contains cycles, by contradiction, let us suppose that it does not contain a subgraph 
homeomorphic to the Wheatstone network. 
Then, by Lemma~\ref{lemma:removeCycle}, we can transform $G$ into an acyclic subgraph $G'$ such that 
${\rm Path}_A(G')={\rm Path}_A(G)$. By Lemma \ref{lemma:equivalentGraphs}, since $G$ is vulnerable, $G'$ is vulnerable too; thus, since $G'$ is acyclic and not TTSP (by Theorem~\ref{thm:vuln-ser-par}), 
it contains a subgraph homeomorphic to the Wheatstone 
network \cite{Duf65} and, consequently, also $G$ does. Absurd.
\qed

\subsection{Proofs of Section \ref{sec:compare}}

\noindent
{\bf Proof of Theorem~\ref{thm:vuln-edgew}\ }
By Theorem \ref{thm:vuln}, $G$ admits a subgraph of the form
$$
\xymatrix@R=10pt@C=20pt{
   && u \ar@{~>}[dd]\ar@{~>}[dr]\\
s \ar@{~>}[r] & s' \ar@{~>}[ur]\ar@{~>}[dr] && t' \ar@{~>}[r]& t\\
   && v \ar@{~>}[ur]
   }
$$
Let us consider the capacity assignment $\{c_e\}_{e \in E}$ that assigns
\begin{itemize}
\item 2 to all edges in $s\leadsto s'$ and $t'\leadsto t$;
\item 1 to all edges in $s'\leadsto u \cup s'\leadsto v 
\cup u\leadsto v \cup u\leadsto t' \cup v\leadsto t'$; and
\item 0 to all the remaining ones. 
\end{itemize}
Trivially, the flow assigning 1 to the path
$s\leadsto s' \leadsto u \leadsto v \leadsto t' \leadsto t$
is a non-maximum saturating flow, since there exists a flow with value 2.
\qed
\bigskip

\noindent
{\bf Proof of Theorem~\ref{thm:vuln-weak}\ }
If $G$ is vulnerable, then, by Theorem~\ref{thm:vuln}, it admits a subgraph $H$
of the form:
$$
\xymatrix@R=10pt@C=20pt{
   && u \ar@{~>}[dd]\ar@{~>}[dr]\\
s \ar@{~>}[r] & s' \ar@{~>}[ur]\ar@{~>}[dr] && t' \ar@{~>}[r]& t\\
   && v \ar@{~>}[ur]
   }
$$
The subgraph $H$ is weak. As a matter of fact, $\{u,v\}$ is an mvs for $H$, 
and the walk 
$u\leadsto v$ allows us to conclude by Theorem~\ref{thm:weak-characterize}.


For the converse implication, let $H$ be a weak subgraph of $G$.
Theorem~\ref{thm:weak-characterize} implies that $H$ admits an mvs $T$
such that there exists a walk $a \leadsto b$ in $H$, with $\{a,b\} \subseteq T$.
By minimality of $T$, there exist in $H$ a walk from $s$ to $a$ (that does not 
contain $b$), a walk from $s$ to $b$ (that does not 
contain $a$), a walk from $a$ to $t$ (that does not 
contain $b$), and a walk from $b$ to $t$ (that does not 
contain $a$).  Thus, we have found the following subgraph of $H$ (and, hence, also of $G$):
$$
\xymatrix@R=10pt@C=20pt{
   && a \ar@{~>}[dd]\ar@{~>}[dr]\\
s \ar@{~>}[r] & s' \ar@{~>}[ur]\ar@{~>}[dr] && t' \ar@{~>}[r]& t\\
   && b \ar@{~>}[ur]
   }
$$
Notice that both $s'$ and $t'$ cannot occur in $a \leadsto b$, otherwise we would
have a cycle in $H$, in contradiction with the assumption of its acyclicity. For the same
reason, $a \neq b$. Then, we can easily conclude that $G$ is vulnerable by Theorem~\ref{thm:vuln}.
\qed
\bigskip

\noindent
{\bf Proof of Theorem~\ref{thm:vulweak}\ }
\begin{description}
\item[(1) $\Rightarrow$ (2):] By Theorem \ref{thm:vuln-edgew}.
\item[(2) $\Rightarrow$ (3):] We prove the contrapositive and work by induction on the structure of $G$.

\quad
The base case is when $G$ is a single edge $(s,t)$. In this case, the only
saturating flow for $G$ is the one that saturates the capacity of $(s,t)$;
trivially, this is the maximum flow.

\quad
If $G$ is the serial composition of $G_1$ and $G_2$ (that are TTSP),
by the inductive hypothesis $G_i$ is not edge-weak, for $i \in \{1,2\}$. 
Now, since $E = E_1 \cup E_2$, every capacity
assignment $\{c_e\}_{e \in E}$ is $\{c_e\}_{e \in E_1} \cup \{c_e\}_{e \in E_2}$.
Moreover, because of serial composition, $\max(G,\{c_e\}_{e \in E}) = \min\{
\max(G_1,\{c_e\}_{e \in E_1}), \max(G_2,\{c_e\}_{e \in E_2})\}$. Finally,
since every source-to-sink path in $G$ is a source-to-sink path in $G_1$
followed by a source-to-sink path in $G_2$, every flow $f$ for $G$ induces
a flow with the same value $f_1$ for $G_1$ and a flow with the same value $f_2$ for $G_2$.
Now, by contradiction, assume that there exists $\{c_e\}_{e \in E}$ such that $(G,\{c_e\}_{e \in E})$ 
is edge-weak, i.e. it admits a saturating flow $\hat f$ such that $|\hat f| < \max(G,\{c_e\}_{e \in E})$. 
Thus, $\hat f_i$ saturates all the paths of $G_i$, for either $i = 1$ or $i = 2$.
But this is in contradiction with the inductive hypothesis, since
$|\hat f_i| < \max(G_i,\{c_e\}_{e \in E_i})$.

\quad
If $G$ is the parallel composition of $G_1$ and $G_2$ (that are TTSP),
the proof is similar. Just notice that in this case $\max(G,\{c_e\}_{e \in E}) = 
\max(G_1,\{c_e\}_{e \in E_1}) + \max(G_2,\{c_e\}_{e \in E_2})$, that every source-to-sink 
path in $G$ is a source-to-sink path either in $G_1$ or in $G_2$ and, consequently, that 
every flow $f$ for $G$ induces two flows, $f_1$ for $G_1$ and $f_2$ for $G_2$, such that
$|f| = |f_1| + |f_2|$. By contradiction, let $\{c_e\}_{e \in E}$ be such that $(G,\{c_e\}_{e \in E})$ 
admits a saturating flow $\hat f$ with $|\hat f| < \max(G,\{c_e\}_{e \in E})$. 
Then, $\hat f_i$ saturates all the paths of $G_i$, for $i \in\{1,2\}$.
The contradiction comes from the fact that $|\hat f_1| + |\hat f_2| <
\max(G_1,\{c_e\}_{e \in E_1}) + \max(G_2,\{c_e\}_{e \in E_2})$ and from
$|\hat f_i| \leq \max(G_i,\{c_e\}_{e \in E_i})$, for both $i \in \{1,2\}$.

\item[(3) $\Rightarrow$ (4):] If $G$ is a not TTSP DAG, then, by \cite{Duf65}, it admits a subgraph homeomorphic to
the Wheatstone graph; such a subgraph is weak (see the proof of Theorem \ref{thm:vuln-weak}).
\item[(4) $\Rightarrow$ (1):] By Theorem \ref{thm:vuln-weak}.
\qed
\end{description}
\bigskip

\end{document}
